\documentclass{article}

\usepackage{booktabs,multirow,array}
\usepackage{amsmath}
\usepackage{amsfonts}
\usepackage{amssymb}
\usepackage{amsthm}
\usepackage{units}
\usepackage{graphicx}
\usepackage{comment}
\usepackage{todonotes}
\usepackage{xcolor}
\usepackage{tikz}
\usetikzlibrary{shadows,arrows}
\usepackage{subcaption}
\usepackage{wrapfig}
\usepackage[T1]{fontenc} 

\usepackage{placeins}

\usepackage{hyperref}

\usepackage{listings}
\usepackage{mathtools}

\usepackage{thmtools}
\usepackage{thm-restate}
\usepackage[algoruled, algo2e, vlined, titlenotnumbered]{algorithm2e}

\newcommand\numberthis{\addtocounter{equation}{1}\tag{\theequation}}

\parindent0pt

\newcommand{\R}{\mathbb{R}}
\newcommand{\N}{\mathbb{N}}
\newcommand{\Zn}{\mathbb{Z}^n}

\newcommand{\cN}{\mathcal{N}}
\newcommand{\cD}{\mathcal{D}}
\newcommand{\NP}{\mbox{\slshape NP}}
\newcommand{\cF}{\mathcal{F}}
\renewcommand{\mid}{:}
\theoremstyle{plain}

\DeclareMathOperator*{\argmax}{arg\!\max}

\newtheorem{theorem}{Theorem}
\newtheorem{lemma}[theorem]{Lemma}
\newtheorem{corollary}[theorem]{Corrolary}

\theoremstyle{definition}
\newtheorem{defn}[theorem]{Definition}

 \author{Corinna Gottschalk \and Britta Peis} 
\title{Submodular Function Maximization over Distributive and Integer Lattices}

\pagestyle{plain}

\begin{document}
\maketitle

\begin{abstract}
\noindent
The problem of maximizing non-negative submodular functions has been studied extensively in the last few years. 
However, most papers consider submodular set functions. 
Recently, several advances have been made for submodular functions on the integer lattice. 
As a direct generalization of submodular set functions, a function $f: \{0, \ldots, C\}^n \rightarrow \mathbb{R}_+$ is submodular, if
$f(x) + f(y) \geq f(x \land y) + f(x \lor y)$ for all $x,y \in  \{0, \ldots, C\}^n$ where $\land$ and $\lor$ denote 
element-wise minimum and maximum. 
The objective is finding a vector $x$ maximizing $f(x)$. In this paper, we present a deterministic $\frac{1}{3}$-approximation
using a framework inspired by \cite{DoubleGreedy}.
Moreover, we show that the analysis is tight and that other ideas used for approximating set functions cannot easily be extended. 

In contrast to set functions, submodularity on the integer lattice does not 
imply the so-called diminishing returns property. Assuming this property, it was shown that many results for set functions can
also be obtained for the integer lattice. 
In this paper, we consider a further generalization. Instead of the integer lattice, we consider a distributive lattice as the function domain
and assume the diminishing returns (DR) property.
On the one hand, we show that some approximation algorithms match the set functions setting. In particular, we can obtain a
$\nicefrac{1}{2}$-approximation for unconstrained maximization, a $(1-\nicefrac{1}{e})$-approximation for monotone functions under a cardinality constraint and 
a $\nicefrac{1}{2}$-approximation for a poset matroid constraint. 
On the other hand, for a knapsack constraint, the problem becomes significantly harder: 
even for monotone DR-submodular functions, we show that there is no $2^{(\log (n^{1/2} - 1))^\delta - 1}$-approximation 
for every $\delta > 0$ under the assumption that $3-SAT$ cannot be solved in time $2^{n^{3/4 + \epsilon}}$. 
\end{abstract}


\section{Introduction}
Recall that a set function $f : 2^\mathcal{N} \rightarrow \mathbb{R}$ is called submodular if $
f(U) + f(W) \geq f(U\cap W) + f(U\cup W)$ for all $U, W \subseteq \mathcal{N}$. 
Optimization problems with submodular objective functions have received a lot of attention in recent years.
Submodular objectives are motivated by the principle of economy of scale, and thus find many applications in real-world problems.
Moreover, submodular functions play a major role in combinatorial optimization.
Several combinatorial optimization problems have some underlying submodular structure, for example, 
cuts in graphs and hypergraphs, or rank functions of matroids.

As a breakthrough result, the problem to find a subset $S\subseteq \cN$ minimizing a submodular function $f$
has been shown to be solvable in strongly polynomial time in \cite{SubmodularMin1}.
In contrast, the corresponding maximization problem
\begin{equation}\label{USM}
\max \{f(S)\mid S\subseteq \cN\}
\end{equation}
for a nonnegative submodular function $f$ is easily seen to be \NP-hard, as it contains,  for example,
\textsc{max cut} as a special case.
We refer to \eqref{USM} as \textsc{unconstrained submodular maximization (USM)}.

Recently, a generalization to submodular functions on a bounded subset of the integer lattice $\Zn$ has also been investigated.
For $x, y \in \Zn$ let $(x \lor y)_e$ denote $\max\{x_e, y_e\}$ and $(x \land y)_e$ denote $\min\{x_e, y_e\}$ 
for $e \in \{1, \ldots, n\}$. 
Then, a function $f: D \rightarrow \R$  on a finite set $D$ of the form 
$D = \{x \in \mathbb{Z}^n |l_i \leq x_i \leq u_i \ \forall i \in \{1, \ldots, n\}\}$
is called submodular if 
$$f(x) + f(y) \geq f(x \lor y) + f(x \land y) \ \forall x, y \in D.$$ 
Clearly, this captures submodular set functions since vectors with entries $0$ and $1$ can be seen as incidence vectors of 
sets and in that case $\land$ and $\lor$ correspond to intersection and union. $D$ is called \emph{bounded integer lattice}.
Submodular functions on the integer lattice have been studied before, for example, in discrete convex analysis, 
 $L^\natural$-convex functions are of this type. We provide more details in ``Related Work''.  

 Submodularity for set functions is equivalent to the diminishing returns (DR) property: 
 For $S\subset T, e \not \in T$ we have $f(S + e) - f(S) \ge f(T + e) - f(T)$, a very natural property in many problems.
 Submodularity on the integer lattice, however, is a weaker property in the sense that Diminishing Returns imply submodularity, but not the other way round.
 Soma and Yoshida \cite{SomaYoshi16} have investigated the setting of DR-submodular functions on the integer lattice. We will use the prefix ``DR'' to describe
 problems that consider DR-submodular functions. 
 
In addition to the integer lattice, we can also consider a more general structure: distributive lattices, that is, lattices where the meet- and join-operation 
 fulfill distributivity. We will provide definitions in Section \ref{sec:preliminaries}. 
 For now, it is important, that due to Birkhoff's theorem, we can represent
 any distributive lattice as the set of all ideals on some poset $P$ and vice versa. 
 Form now on, we always assume that distributive lattices $\cD(P)$ are (implicitly) given by this poset.   

Maximizing a submodular function subject to some side constraints is also a well-understood problem. 
Typically, these constraints are matroid or knapsack constraints, with the special case of a cardinality constraint, which can be seen as a matroid constraint for 
the so called uniform matroid. 
In contrast to the unconstrained setting, considering monotone functions still yields interesting problems here. 

There are corresponding generalizations of matroids for both types of lattices presented above: 
On the integer lattice, we can consider integer polymatroids, on distributive lattices, there are poset-matroids. 
Hence, the problem of maximizing a submodular function subject to these constraints can be generalized as well. 
 
\subsection{Our results}
In this work, we present approximation algorithms for several constrained and unconstrained submodular maximization problems, 
in particular, we consider the following. 

\emph{USM on the integer lattice.}
Given a submodular function $f: D \rightarrow \R_+$ on a bounded integer lattice $D$, 
we show how to find a pseudopolynomial $\frac{1}{3}$-approximation. 
Moreover, we show that the analysis is tight and discuss some difficulties for obtaining 
better approximation guarantees using existing techniques that have been applied successfully for set functions ($C = 1$). 
Note that the one-dimensional case where $D = \{0, 1, \ldots, C\}$ already illustrates that obtaining an exact polynomial algorithm 
is not possible, since any function on $D$ is submodular.  
A preliminary version of these results was published in \cite{mySubmax}.  

\emph{DR-USM on the distributive lattice.}
Given a DR-submodular function $f: \cD(P) \rightarrow \R_+$ on a distributive lattice $\cD(P)$, we show how to generalize the algorithm of 
Buchbinder et al.\ \cite{DoubleGreedy} for set functions to obtain a randomized $\frac{1}{2}$-approximation. 

\emph{Cardinality and Poset Matroid Constraints on the distributive lattice.}
Given a monotone DR-submodular function $f: \cD(P) \rightarrow \R_+$ on a distributive lattice $\cD(P)$, 
we obtain a $(1 - \nicefrac{1}{e})$-approximation 
for maximization under a cardinality constraint and a $\frac{1}{2}$-approximation under a poset matroid constraint.
 
 
\emph{Hardness on the distributive lattice.}
While the above results might suggest that submodular maximization on the distributive lattice is not 
fundamentally different from the integer lattice or even from set functions, this surprising result shows otherwise. 
Maximizing a monotone, DR-submodular function under a knapsack constraint is well understood and approximable in the previously mentioned cases.  
However, we show, that for the distributive lattice, there is no $2^{(\log (n^{1/2} - 1))^\delta - 1}$-approximation for every $\delta > 0$ 
under the assumption that $3-SAT$ cannot be solved in time $2^{n^{3/4 + \epsilon}}$.

\subsection{Related Work}\label{sec:related-results}
A comprehensive study on USM has been done by Feige, Mirrokni and Vondr\'ak in \cite{LocalSearch} 
who provide and analyze several constant approximation algorithms for USM.
In particular, they present a simple Local Search algorithm that yields a $\frac{1}{3}$-approximation.
Using a noisy version of $f$ as objective function, they could improve the performance guarantee to $\frac{2}{5}$.
Feige et al.\ in \cite{LocalSearch} also showed that we cannot hope for a performance guarantee lower than $\frac{1}{2}$.
They could prove that any $\frac{1}{2} + \varepsilon$ -approximation would require an 
exponential number of queries to the oracle. For symmetric submodular functions, they already show that the ratio is tight.
Subsequently, Oveis Gharan and Vondr\'ak \cite{USMImprovement1} and Feldmann, Naor and Schwartz \cite{USMImprovement2} 
improved the approximation ratio to 0.41 and 0.42 respectively and finally Buchbinder, Feldmann, Naor 
and Schwartz closed the gap and gave a randomzied $\frac{1}{2}$-approximation for USM in \cite{DoubleGreedy}.
They also present a deterministic $\frac{1}{3}$-approximation using a similar idea. 
Both algorithms use a ``Double Greedy'' framework that starts with two different sets and, 
for a fixed order of the elements, decides in each step which of the two sets should be modified using the given element.
Later, Buchbinder and Feldman showed in \cite{DerandomizedDoubleGreedy} how to 
derandomize this algorithm to obtain a deterministic $\frac{1}{2}$-approximation.

For several special types of submodular functions,
better approximation ratios can be proved.
For example, Goemans and Williamson \cite{MaxCut1} 
provide an $0.878$-approximation for \textsc{max cut} and a $0.796$-approximation for \textsc{max dicut} based on semidefinite programming and 
Ageev and Sviridenko give a $0.828$-approximation for \textsc{maximum facility location} \cite{FacLoc1}. 

%
Furthermore, USM has applications in marketing in social networks and revenue maximization \cite{SocialNetworks}. 
Submodular function maximization also plays a crucial role in algorithmic game theory. 
For example, Dugmi et al.\ use USM as well as constrained submodular maximization for analyzing auctions in \cite{GameTheory1} and 
Schulz and Uhan use USM as a subroutine to calculate the \emph{least core} in cooperative games \cite{GameTheory2}.

Since vectors with entries $0$ and $1$ can be seen as incidence vectors of sets, 
submodular functions on the integer lattice can be seen as a direct generalization of submodular set functions and in particular, all hardness results still apply. 
Moreover, the integer lattice can be interpreted as a specific distributive lattice that consists of disjoint chains. 
However, we remark that this correspondence is only pseudopolynomial, i.e.\ we need a pseudopolynomial number of 
elements to represent the integer lattice as a distributive lattice.  
Moreover, Soma and Yoshida observed that, for a DR-submodular function on the integer lattice, 
there is an equivalent submodular function on a Boolean lattice whose size is pseudopolynomially bounded (by $n$ and the upper bound for the components).
We refer to a talk given at a Hausdorff Trimester Program which is also publicly available \cite{SomaBonnTalk}.
In particular, any approximation algorithm for submodular set functions yields a pseudopolynomial approximation 
of the same guarantee for DR-submodular functions on the integer lattice, where the running time depends on the bound $C$. 
Thus, assuming DR-submodularity, the challenge on the integer lattice is developing polynomial algorithms.
In particular, there is a polynomial $\frac{1}{2}$-approximation for DR-USM on the integer lattice \cite{SomaDR-SMBIL}. 
However, our hardness result suggests that no such easy reduction to submodular set functions exists for distributive lattices. 

In discrete convex analysis, several special cases of submodular functions on the integer lattice have been investigated. 
For example, $L^\natural$-convex and $M^\natural$-concave functions are submodular 
and $M ^\natural$-concave functions can be maximized in polynomial time \cite{Murota}.

As mentioned before, the problem of minimizing a submodular function is solvable in polynomial time. 
This is not only true for submodular set functions, but even for submodular functions on distributive lattices, 
or more precisely on ring families \cite{FujiDistrLattice}. On the other hand, nothing is known for maximization 
on the distributive lattice. 

A considerable amount of work deals with maximizing both monotone and non-monotone submodular functions subject to some side constraints. 
Most of these papers consider submodular set functions. The side constraints that are typically investigated are matroid and knapsack constraints 
and combinations of these. 
The first result in this area goes back to Fisher, Nemhauser and Wolsey, who considered monotone set functions and 
presented a $(1-\nicefrac{1}{e})$-approximation for a cardinality constraint \cite{FisherGreedyI} 
and a $\nicefrac{1}{P + 1}$-approximation for $P$ matroid constraints \cite{FisherGreedyII}. 
Calinescu et al.\ improved this to $(1 - \nicefrac{1}{e})$ for one matroid constraint \cite{MonotoneMatroid1}. 
Nemhauser and Wolsey \cite{NemhauserHardnessSM} showed that any 
approximation better than $(1 - \nicefrac{1}{e})$ requires an exponential number of queries if $f$ is accessed via a value oracle even for the case of a
cardinality constraint and monotone functions. 
Moreover, Feige showed that Max-$k$-Cover, which is a special case of maximizing a monotone submodular function subject to a cardinality constraint is hard
to approximate to within a factor of $(1 - \frac{1}{e} - \epsilon)$ for any fixed $\epsilon$ unless $P = NP$. In contrast to the other complexity result, 
the submodular function is given by a closed form and not just accessed by an oracle.   

For monotone set functions, Sviridenko \cite{SviridenkoKnapsack} provided the first $(1-\nicefrac{1}{e})$-approximation for a knapsack constraint. 
Since a cardinality constraint can also be interpreted as a knapsack constraint, the same hardness result applies.  

Inaba et al.\ \cite{Inaba:KnapsackIL} first considered constrained problems on the integer lattice and 
showed how to obtain a pseudopolynomial $(1 - \nicefrac{1}{e})$-approximation for the monotone knapsack case.
In \cite{SomaYoshi16}, Soma and Yoshida presented $(1 - \nicefrac{1}{e} - \epsilon)$-approximations with 
polynomial running times depending on $\frac{1}{\epsilon}$ for DR-submodular monotone functions and a cardinality, polymatroid or knapsack constraint.
Furthermore, they achieve the same approximation guarantee 
for submodular monotone functions and a cardinality constraint using an algorithm whose running time depends 
logarithmically on the ratio of the maximum function value and the minimum positive increase. 


In \cite{ContentionResolution}, Chekuri, Vondr\'ak and Zenklusen consider the very general problem 
of maximizing  a non-monotone submodular function on a down-monotone family of sets, 
which in particular contains the previously mentioned constraints. 
They also provide a concise overview over previous results in non-monotone maximization and we refer to
their paper for details. 
To the best of our knowledge, no results for non-monotone constrained maximization on the integer lattice exist. 

While our main interest in SMBIL is of a theoretical nature with the ultimate goal to gain a better understanding of submodular 
function maximization, there are applications where submodular functions on the integer lattice play a crucial role.
For example, Bolandnazar et al.\ (\cite{SubmodSupplyChains}) showed that Assemble-to-Order Systems for supply chain management
can be optimized by  minimizing a submodular function on the integer lattice. 
\cite{Inaba:KnapsackIL} discusses several applications for submodular maximization on the integer lattice, e.g.\ budget allocation and sensor placement. 
There is also an alternative way to interpret the bounded integer lattice: Let $x \in \mathbb{Z}_+^n$ denote the incidence vector of a multiset
where entries in $x$ specify the multiplicity of individual elements. Then SMBIL can be seen as maximizing a submodular function
on multisets containing at most $C$ copies of each element which gives rise to possible applications. This illustrates for example how sensor placement can be 
seen as an application of SMBIL by allowing the possibility to place multiple sensors at the same location. 

In Figure \ref{fig:SM-overview}, we provide a graphic summary of all problems considered in this paper and related problems.  
An arrow from one box $A$ to box $B$ indicates that problem  $A$ generalizes $B$. 
We denote distributive, integer and Boolean lattices by DL, IL and Boolean respectively, where Boolean is equivalent to the the well-studied case of set functions. 
The diminishing returns property is denoted by DR and we remind the reader that every submodular set function is  DR. 
This is not given on integer or distributive lattices, 
where DR-submodular functions are a subclass of submodular functions. 
Those cases where we only have pseudopolynomial algorithms are marked as such by the additional tag "pseudo". 
Due to results on hardness of approximation for USM and ($\le k$, Boolean), all approximation guarantees except for (poset matroid, DL, DR) and (SMBIL) are tight (up to $\epsilon$).

The paper is organized as follows: The necessary terminology is introduced in Section \ref{sec:preliminaries}. Then, we first investigate unconstrained maximization
on the integer lattice in Section \ref{sec:SMBIL-Algo}. In Section \ref{sec:SM-DR-DL-algos} we present algorithms for several 
maximization problems for DR-submodular functions on the distributive lattice and in Section \ref{sec:SM-DR-DL-hardness}, we present a
strong hardness result. 

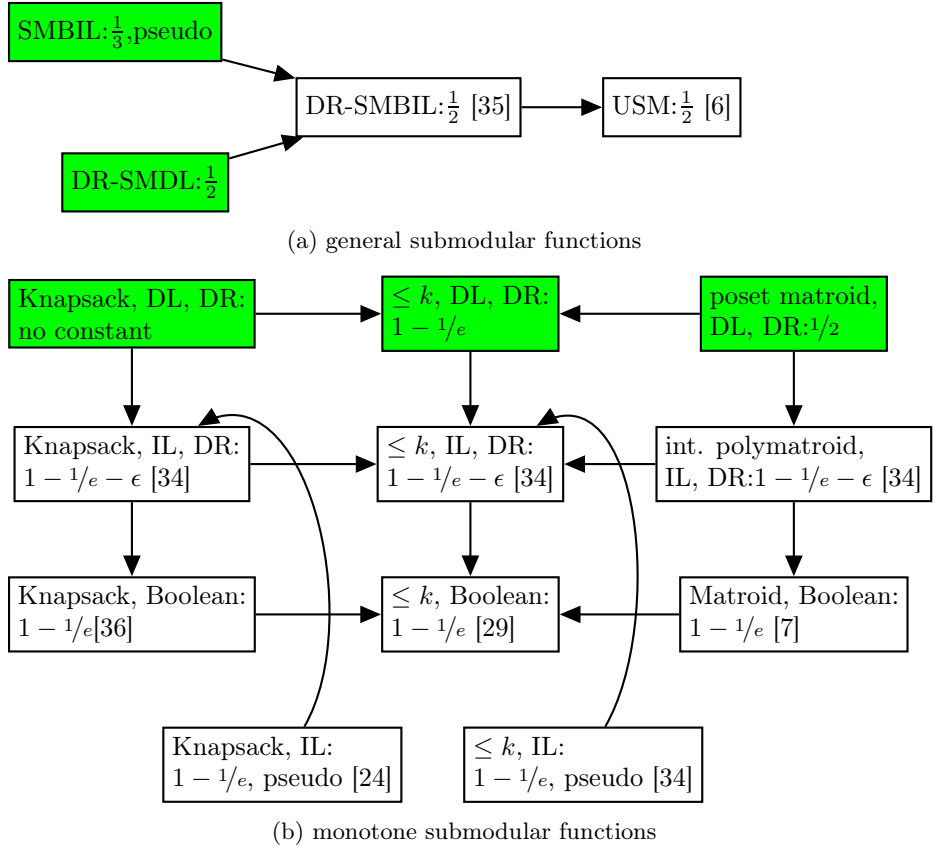
\begin{figure}[tb]
  
 \tikzset{%
  block/.style    = {draw, thick, rectangle, minimum height = 2.2em,
    minimum width = 3em}
}

\begin{subfigure}{\textwidth}
\begin{tikzpicture}[auto, thick, node distance=2cm, >=triangle 45]

 \draw node at (6.5, 0)[block] (USM) {USM:{$\frac{1}{2}$ \cite{DoubleGreedy}}};
  \draw node at (3, 0)[block] (DR-SMBIL) {DR-SMBIL:$\frac{1}{2}$ \cite{SomaDR-SMBIL}};
  \draw node at (-0.9, 1)[block, fill = green] (SMBIL) {SMBIL:$\frac{1}{3}$,pseudo};
  \draw node at (-0.5, -1)[block, fill = green] (DR-SMDL) {DR-SMDL:$\frac{1}{2}$};
  \draw[->](DR-SMBIL) -- (USM);
  \draw[->](SMBIL) -- (DR-SMBIL);
 \draw[->](DR-SMDL) -- (DR-SMBIL);
 
 \end{tikzpicture}
 \caption{general submodular functions}
 \end{subfigure}

 \begin{subfigure}{\textwidth}
 \begin{tikzpicture}[auto, thick, node distance=2cm, >=triangle 45]

 		\draw [white] (0,4.6) circle (1mm);  
       \draw node at (4.5, 0)[block, align = left] (CardB) {$\le k$, Boolean:\\ $1 - \nicefrac{1}{e}$ \cite{FisherGreedyI}};
       \draw node at (8.8, 0)[block, align = left] (MatB) {Matroid, Boolean:\\ $1 - \nicefrac{1}{e}$ \cite{MonotoneMatroid1}};
       \draw node at (0, 0)[block, align = left] (KnapB) {Knapsack, Boolean:\\ $1 - \nicefrac{1}{e}$\cite{SviridenkoKnapsack}};
       
       \draw node at (4.5, 2)[block, align = left] (CardILD) {$\le k$, IL, DR:\\ $1 - \nicefrac{1}{e} - \epsilon$ \cite{SomaYoshi16}};
       \draw node at (8.8, 2)[block, align = left] (MatILD) {int. polymatroid,\\ IL, DR:$1 - \nicefrac{1}{e}- \epsilon$ \cite{SomaYoshi16}};
       \draw node at (0, 2)[block, align = left] (KnapILD) {Knapsack, IL, DR:\\ $1 - \nicefrac{1}{e}- \epsilon$ \cite{SomaYoshi16}};
       
       \draw node at (4.5, 4)[block, align = left, fill = green] (CardDL) {$\le k$, DL, DR:\\ $1 - \nicefrac{1}{e}$};
       \draw node at (8.8, 4)[block, align = left, fill = green] (MatDL) {poset matroid,\\ DL, DR:$\nicefrac{1}{2}$};
       \draw node at (0, 4)[block, align = left, fill = green] (KnapDL) {Knapsack, DL, DR:\\ no constant};
       
       \draw node at (2, -2)[block, align = left] (KnapIL) {Knapsack, IL:\\ $1 - \nicefrac{1}{e}$, pseudo \cite{Inaba:KnapsackIL}};
       \draw node at (6, -2)[block, align = left] (CardIL) {$\le k$, IL:\\ $1 - \nicefrac{1}{e}$, pseudo \cite{SomaYoshi16}};
       
       \draw[->](MatB) -- (CardB);
       \draw[->](KnapB) -- (CardB);
       \draw[->](CardILD) -- (CardB);
       
        \draw[->](KnapILD) -- (CardILD);
         \draw[->](MatILD) -- (CardILD);
          \draw[->](CardDL) -- (CardILD);
          
           \draw[->](MatDL) -- (CardDL);
              \draw[->](KnapDL) -- (CardDL);
              
              \draw[->](KnapDL) -- (KnapILD);
              \draw[->](KnapILD) -- (KnapB);
              \draw[->](MatDL) -- (MatILD);
              \draw[->](MatILD) -- (MatB);

	\draw[ ->] (KnapIL) to [out=60,in=30](KnapILD);
	\draw[->] (CardIL) to [out=60,in=30] (CardILD);
 \end{tikzpicture}
 \caption{monotone submodular functions}
 \end{subfigure}

\caption{Overview of all problems, those considered in this paper are shaded. 
}\label{fig:SM-overview}
\end{figure}


\section{Preliminaries}\label{sec:preliminaries}
For submodular set functions, it is \NP-hard to decide
whether there exists a set $S$ such that $f(S) > 0$, as mentioned by Feige et al.\ in \cite{LocalSearch}. 
Since our main goal is finding good approximation algorithms, 
throughout this paper, we assume that all submodular functions are nonnegative and given by a value oracle.

For a vector $x \in \Zn$ we denote by $(x|x_j = k)$ the vector where all 
but the entry $x_j$ remain the same and $x_j$ is set to k.
As usual, $[n]$ denotes the set $\{0,1, \ldots, n\}$ and $e^i \in \{0, 1\}^n$ 
denotes the vector where $e^i_i = 1$ and $e^i_j = 0$ for all $j \not = i$.

\subsection{Submodular maximization on integer lattices.}
Given a bounded integer lattice $D = \{x \in \mathbb{Z}^n |l_i \leq x_i \leq u_i \ \forall i \in \{1, \ldots, n\}\}$ 
and a submodular function 
$f : D \rightarrow \mathbb{R}_+$, 
we consider the problem of maximizing $f$ on $D$:
\begin{equation}\label{SMBIL}
\max\{f(x)\mid x \in D \}.
\end{equation}
We will refer to \eqref{SMBIL} as
\textsc{Submodular maximization on a bounded integer lattice (SMBIL)}.
For ease of notation, we will from now on assume that $l_i = 0$ and 
$u_i = C \ \forall i \in \{1, \ldots, n\}$. 
Thus, we prove all results for a bounded integer lattice of the form $\{0, 1, \ldots, C\}^n$, but 
all results in this paper can be easily generalized to any bounded integer lattice as defined above. 

As mentioned before, SMBIL generalizes USM, thus the hardness of approximation for USM holds.
The integer lattice with $C = 1$ underlying USM is also called the \emph{Boolean lattice}.
Directly generalizing the Diminishing Returns property for set functions, it can be stated for the integer lattice as in \cite{SomaYoshi16}: 
$f(x + \chi_i) - f(x) \ge f(y + \chi_i) - f(y)$ for vectors $x \le y$ where $\chi_i \in {0, 1}^n$ denotes the vector whose entries are $0$ 
everywhere except for component $i$. 
They also remark that a function $f$ is DR-submodular on the integer lattice  if and only if $f$ is submodular and coordinate-wise concave, i.e.\
$f(x + \chi_i) - f(x) \ge f(x + 2\chi_i) - f(x + \chi_i)$ for any $x$ and $i$ \cite{SomaYoshiSubmodularCover}.
Our definition of submodularity differs slightly from that of Soma at al.: while we only define a submodular function on $D$, they define submodularity 
on $\Zn$, but restrict to vectors $x$ with $0\le x_i\le C$ for all optimization problems. These formulations are equivalent in the sense that 
any submodular function on $D$ can be extended to a submodular function on $\Zn$.

\subsection{Posets and Distributive Lattices}
For a partially ordered set (poset) $P = (\cN, \preceq)$, an \emph{antichain} 
is a set $S\subseteq \cN$ of incomparable elements 
and a \emph{chain} is a set $S\subseteq \cN$ where each pair of elements is comparable. 
An \emph{ideal} is a set $S\subseteq \cN$ where $s\in S$ implies 
$t \in S$ for all $t \preceq s$. By $\cD(P)$ we denote the set of all ideals for $P$. 

A \emph{lattice} is a poset $P$ any two of whose elements $x, y \in P$ 
have a \emph{meet} $x \land y$, i.e.\ a unique greatest common lower bound, and a \emph{join} $x \lor y$, i.e.\ a unique least common upper bound. 
A lattice is called distributive, if meet and join satisfy distributivity. 
One example for such a lattice is the (bounded) integer lattice with component-wise minimum and maximum as join and meet.  
An element $x$ of a lattice is called \emph{join-irreducible}, if it cannot be characterized as the join of two elements $y, z \not = x$ .

For a poset $P$,  the pair $(\cD(P), \subseteq)$ forms a distributive lattice with union and intersection as join and meet. 
Conversely, any distributive lattice is isomorphic to the lattice of ideals w.r.t.\
the induced poset on the join-irreducible elements of that lattice. This fact is known as Birkhoff's theorem \cite{Birkhoff}.
Therefore, we will assume throughout this paper that a distributive lattice is given in the form  $(\cD(P), \subseteq)$ for a poset $P$. 
For the example of the bounded integer lattice of dimension $n$, 
the corresponding lattice of ideals would be the set of ideals $\mathcal{D}(P)$ on a poset $P$ consisting of $n$ disjoint chains of length $C$. 
For a more extensive introduction to lattice theory, the reader is referred to \cite{Birkhoff}.

For a poset $P=(\cN,\preceq)$, we can consider the following generalization of SMBIL.
 \begin{equation}\label{SMDL} 
 \max\{f(S)\mid S\in \mathcal{D}(P)\},
 \end{equation}
 where
 $\mathcal{D}(P)$ is the collection of all ideals in the poset $P$ and $f$ a nonnegative submodular function on $\mathcal{D}(P)$.
 We refer to \eqref{SMDL} as \textsc{Submodular maximization on distributive lattices (SMDL)}.

 
The DR-property naturally generalizes to distributive lattices: 
$f(S + x) - f(S) \ge f(T + y) - f(T)$ for $S \subseteq T$ with $S, T \in \cD(P)$ and $x \le y$ such that $S+x, T +y \in \cD(P)$. 
This definition contains both the definition on set functions and on the integer lattice. 


 \subsection{Poset Matroids}
Integer polymatroids can be seen as a generalization of matroids to the integer lattice. 
These form special distributive lattices, as we saw above, and matroid structures can be extended further to distributive lattices as well. 
One way of doing so are poset matroids, and we are going to give a brief introduction here.
The main difference to ordinary matroids is that the ground set is a poset and matroid axioms only have to hold for ideals of this poset.
 \begin{defn}
 Given a partial order $P=(\cN,\preceq)$, a nonempty family $\cF \subseteq \cD(P)$ forms the independent sets of a poset matroid 
 if the following two properties hold:
 
 \begin{itemize}
\setlength{\itemindent}{1em}
  \item[(M1)] $Y\in \cF, X\subset Y, X \in \cD(P) \Rightarrow X\in \cF$
\item[(M2)] $X,Y \in \cF,\  |X| < |Y| \Rightarrow \exists e\in Y\setminus{X}$ with $X\cup\{e\}\in \cF$.
\end{itemize}
 \end{defn}
 
We remark that matroids are the special case of poset matroids where $P$ consists only of pairwise incomparable elements and 
integer polymatroids correspond to the special case where $P$ consists of disjoint chains. 
Therefore, by considering all subsets instead of the sets in $\cD(P)$ we obtain the usual matroid definition.

We use the usual matroid terminology for poset matroids, for example, sets in $\cF$ are called independent and 
a basis in a poset matroid is a maximal independent set.

Poset matroids were introduced by Dunstan, Ingleton and Welsh~\cite{DIW1972} and are also known as \emph{distributive supermatroids}. 
Moreover, they are a special case of \emph{ordered matroids} as introduced by Faigle \cite{Faigle1984} and several other structures.
They share many structural properties with matroids, in particular, Faigle showed that, under a consistency assumption on the cost, 
it is possible to compute a minimum cost basis using a greedy algorithm \cite{Faigle1984}. 
Moreover, Barnabei et al.\ presented a strong base exchange property \cite{BarnabeiSymmExchange} and for two poset matroids on the same poset, 
finding a  common independent set of maximum cardinality is possible in polynomial time as shown by Tardos \cite{TardosPosetIntersection}. 
She also showed that the intersection problem is NP-hard for two poset matroids on posets with different partial orders.  

However, in contrast to polymatroids, as far as we know, there exists no polyhedral description for poset matroids.

It is important to realize that poset matroids are not just the intersection of independent sets of a matroid  $(E, \cF)$ and ideals of a poset on $E$.
Consider for example the graphic matroid on the multigraph in Figure \ref{fig:posetMatroidDifference}.
Then the sets $\{e_1, e_3\}$ and $\{e_2\}$ are independent ideals,
but property $(M2)$ is not fulfilled.

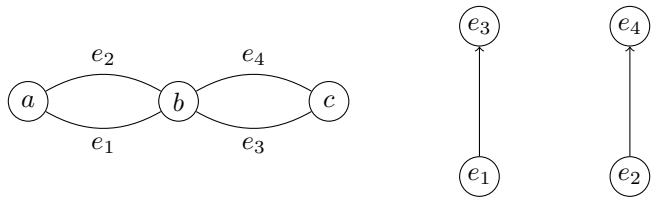
\begin{figure}[tb]
\centering
\begin{tikzpicture}[node distance=3cm]
\tikzstyle{vertex}=[circle, draw,inner sep=1pt, minimum width=15pt]
\tikzstyle{edge} = [draw]
\node[vertex] (a) at (0,0) {$a$};
\node[vertex] (b) at (2,0) {$b$};
\node[vertex] (c) at (4,0) {$c$};
\draw[edge, bend right] (a) edge node[below]{$e_1$} (b);
\draw[edge, bend left] (a) edge node[above]{$e_2$} (b);
\draw[edge, bend right] (b) edge node[below]{$e_3$} (c);
\draw[edge, bend left] (b) edge node[above]{$e_4$} (c);
\node[vertex] (e1) at (6,-1) {$e_1$};
\node[vertex] (e3) at (6,1) {$e_3$};
\node[vertex] (e2) at (8,-1) {$e_2$};
\node[vertex] (e4) at (8,1) {$e_4$};
\draw [->](e1) edge (e3);
\draw [->](e2) edge (e4);
\end{tikzpicture}
\caption{Multigraph and poset on the edges, illustrating that independent ideals in the graphic matroid do not form a poset matroid.}\label{fig:posetMatroidDifference}
\end{figure}


\section{Approximating SMBIL}\label{sec:SMBIL-Algo}
In this section, we consider unconstrained maximization of a general nonnegative submodular function on the integer lattice (SMBIL). In particular, 
we do not require DR-submodularity. 

We first present a $\nicefrac{1}{3}$-approximation for SMBIL which is inspired by \cite{DoubleGreedy}. 
Then, we show that our analysis is tight and discuss approaches for a randomized Double Greedy algorithm. 
``Double Greedy'' refers to the idea of starting with two vectors $a$ and $b$ and modifying them until both vectors are equal,
 while ensuring $f$ never decreases.
In the beginning $a_i = 0 $ and $b_i = C \ \forall i \in \{1, \ldots, n\}$, i.e.\ initially, $a$ (resp.\ $b$) is the unique minimal (maximal) element
in $D$. 
Now, we traverse the components in a fixed order, say $1, \ldots n$. For a given index $k$, we change $a_k$ and $b_k$ 
while maintaining $a \leq b$ without decreasing the submodular function $f$.
In particular, we find the best possible way to modify $a$ and $b$ while changing only component $i$. We then choose the modification which yields
the biggest increase in $f$ and set both vectors to that value. 

In the scenario where the integer lattice is bounded by 1, our Algorithm \ref{MyGreedy} coincides with the one in \cite{DoubleGreedy} 
when vectors are interpreted as characteristic vectors of sets. 

\begin{theorem}\label{ApproximationThm}
Let $f: [C]^n \rightarrow \mathbb{R}_+$ be a nonnegative submodular function. 
Then Algorithm \ref{MyGreedy} is a $\nicefrac{1}{3}$-approximation for SMBIL and has running time 
$\mathcal{O}(CnT)$ where $T$ is the time for one call to the oracle representing $f$. 
\end{theorem}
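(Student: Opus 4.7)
The plan is to adapt the deterministic Double Greedy analysis of Buchbinder et al.~\cite{DoubleGreedy} from the Boolean lattice to the integer lattice. Write $a^k,b^k$ for the two vectors after iteration $k$, with $a^0=\mathbf 0$, $b^0=C\cdot\mathbf 1$, and let $v^k$ denote the common value assigned to coordinate $k$, so that $a^k_k=b^k_k=v^k$ and $a^n=b^n$ is the algorithm's output $\mathrm{ALG}$. Fix any optimal vector $OPT$ and define the hybrid $o^k$ by $o^k_i=v^i$ for $i\le k$ and $o^k_i=OPT_i$ for $i>k$; then $o^0=OPT$ and $o^n=\mathrm{ALG}$. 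Set $\alpha_k:=f(a^k)-f(a^{k-1})$ and $\beta_k:=f(b^k)-f(b^{k-1})$. Before the main step I would record two easy invariants: (i) $a^k\le o^k\le b^k$ pointwise, which follows from $0\le OPT_i\le C$; and (ii) $\alpha_k,\beta_k\ge 0$, which holds because the ``do-nothing'' choices for the $a$- and $b$-updates are always candidates, so the algorithm's best choice cannot be worse than leaving the relevant vector unchanged.

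The heart of the proof is the inequality
\[
f(o^{k-1})-f(o^k)\;\le\;\alpha_k+\beta_k\qquad (k=1,\dots,n),
\]
which I would establish by a case split on the sign of $OPT_k-v^k$. The equality case is trivial. When $OPT_k<v^k$, apply submodularity to the pair $a^k$ and $o^{k-1}$: their join is exactly $o^k$, and their meet is the vector $a'$ obtained from $a^{k-1}$ by setting coordinate $k$ to $OPT_k$. Submodularity then gives $f(o^{k-1})-f(o^k)\le f(a^k)-f(a')$. Since the algorithm chose $v^k$ to be at least as good as the alternative update $v=OPT_k$ on the $a$ side, the right-hand side is at most $\alpha_k\le\alpha_k+\beta_k$ (using $\beta_k\ge 0$). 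The symmetric case $OPT_k>v^k$ is handled by applying submodularity to $b^k$ and $o^{k-1}$, whose meet is $o^k$ and whose join is $b^{k-1}$ with coordinate $k$ set to $OPT_k$, and then invoking the algorithm's optimality on the $b$ side.

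Telescoping this inequality yields
\[
f(OPT)-f(\mathrm{ALG})=\sum_{k=1}^n\!\bigl[f(o^{k-1})-f(o^k)\bigr]\le\sum_{k=1}^n(\alpha_k+\beta_k)=[f(a^n)-f(a^0)]+[f(b^n)-f(b^0)]\le 2f(\mathrm{ALG}),
\]
where the last step uses $f\ge 0$. This immediately gives $f(\mathrm{ALG})\ge\tfrac13 f(OPT)$. The running-time bound is routine: each of the $n$ iterations examines the at most $C+1$ candidate values for $v^k$ and performs $O(1)$ oracle calls per candidate, for a total of $\mathcal O(nCT)$.

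The main obstacle is the central case analysis. In the Boolean setting $C=1$ of \cite{DoubleGreedy} only two choices are possible per step and the relevant submodular marginal-monotonicity fact is elementary. For the integer lattice one must (a) identify the correct pair of lattice vectors whose meet or join lands exactly on $o^{k-1}$ or $o^k$, and (b) combine the resulting submodular inequality with the algorithm's optimality over a richer family of candidate values (parameterised by $v\in\{0,\dots,C\}$ rather than by a binary add/remove decision). Crucially, the argument uses only the submodular inequality $f(x)+f(y)\ge f(x\land y)+f(x\lor y)$ and never invokes the diminishing-returns property, which is what makes the $\tfrac13$ guarantee meaningful in the SMBIL setting where DR is not assumed.
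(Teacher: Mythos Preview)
Your telescoping framework is right, but the core step has two real gaps.

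First, $\alpha_k,\beta_k\ge 0$ does \emph{not} follow from ``do-nothing is a candidate.'' The algorithm selects a \emph{single} value $v^k$, namely the argmax for whichever of $a,b$ achieves the larger $\delta$; the other vector is then forced to adopt this same $v^k$. If $\delta_a\ge\delta_b$, then $\alpha_k=\delta_a\ge 0$ is immediate, but $\beta_k=f(b^{k-1}\mid b_k=v^k)-f(b^{k-1})$ has no a~priori reason to be non-negative, since $v^k$ was optimal for $a$, not for $b$. The paper proves this separately (Lemma~\ref{IncreaseLemma}) via a submodularity argument that also uses the specific tie-breaking rule (maximal argmax on the $a$-side, minimal on the $b$-side).

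Second, and more seriously, your submodularity inequality points the wrong way. With $x=a^k$ and $y=o^{k-1}$ you correctly get $x\lor y=o^k$ and $x\land y=a'$, but submodularity then reads $f(a^k)+f(o^{k-1})\ge f(o^k)+f(a')$, i.e.\ $f(o^{k-1})-f(o^k)\ge f(a')-f(a^k)$, a \emph{lower} bound on the loss rather than the upper bound you assert; the same defect occurs in the symmetric case. The remedy is to pair with the \emph{opposite} side: when $OPT_k<v^k$, apply submodularity to $o^k$ and $(b^{k-1}\mid b_k=OPT_k)$, whose meet is $o^{k-1}$ and whose join is $b^k$, yielding $f(o^{k-1})-f(o^k)\le f(b^{k-1}\mid b_k=OPT_k)-f(b^k)$. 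The right-hand side is then controlled using $f(b^{k-1}\mid b_k=OPT_k)-f(b^{k-1})\le\delta_b\le\alpha_k$ together with $\beta_k\ge 0$ from the first point. This cross-over---bounding the loss in $OPT$ against the $b$-side when $v^k$ overshoots $OPT_k$, and vice versa---is exactly what the paper's Lemma~\ref{OptBoundLemma} does (in its two-substep indexing), and it is not optional.
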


\subsection{Proof of Theorem \ref{ApproximationThm}}

The proof of Theorem \ref{ApproximationThm} relies upon two lemmas. 

\begin{algorithm2e}[htb]
\caption{Generalized Double Greedy for SMBIL}
\DontPrintSemicolon
\textbf{Input:} A bounded integer lattice defined by a bound $C$ and a dimension $n$, 
a nonnegative submodular function $f$ on 
$[C]^n$\;
\textbf{Output:} A vector $a \in [C]^n$\;
Set $a^0_j = 0, b^0_j = C \ \forall 1 \leq j \leq n $, $i = 1$\;
\For{ k = 1 \KwTo n}
{

  $\delta_{a,i} =  \max_{c \in [C]} f(a^{i-1}|a^{i-1}_k = c) - f(a^{i-1})$\;
  $\delta_{b,i} =  \max_{c \in [C]} f(b^{i-1}|b^{i-1}_k = c) - f(b^{i-1})$\;
  \If {$\delta_{a,i} \geq \delta_{b,i}$}
  {
    Let $c'$ be the maximal number among those for which $\delta_{a, i}$ is obtained.\; 
    $a^i = (a^{i-1}|a^{i-1}_k = c')$, $b^i = b^{i-1}$\;  
    $a^{i+1} = a^i$, $b^{i+1} = (b^i|b^i_k = c')$\;
  } 
  \Else
  {
    Let $c'$ be the minimal number among those for which $\delta_{b, i}$ is obtained.\; 
    $a^i = a^{i-1}$, $b^i = (b^{i-1}|b^{i-1}_k = c')$\;  
    $b^{i+1} = b^i$, $a^{i+1} = (a^i | a^i_k = c')$\;
  }
  $i = i + 2$\;  
}
\Return{ $a^{2n}$}\;
\label{MyGreedy}
\end{algorithm2e}

First, we show that Algorithm \ref{MyGreedy} really is a Greedy algorithm 
in the sense that $f$ never decreases:
\begin{lemma}\label{IncreaseLemma}
Let $f: [C]^n \rightarrow \R_+$ be a submodular function. Then for all $1\leq i \leq 2n$
and vectors $a^i$, $b^i$ as in Algorithm \ref{MyGreedy} the following holds: 
$f(a^i) \geq f(a^{i-1})$ and $f(b^i) \geq f(b^{i-1})$. 
\end{lemma}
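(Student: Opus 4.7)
The plan is to prove both inequalities by a case analysis on the branch selected at the current iteration, with the invariant $a^{i-1} \le b^{i-1}$ (verified by induction from the initialization $a^0 = 0$, $b^0 = (C,\ldots,C)$, since every two-step update assigns the same value $c' \in [C]$ to coordinate $k$ of both vectors) playing the central role. For each value of $i$ exactly one of the two vectors is actually modified, so it suffices to bound the change in that vector.

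The easy half is the \emph{active} update. In the if-branch, the step $a^{i-1} \to a^i$ overwrites $a^{i-1}_k$ with the maximizer $c' \in [C]$ of $f((a^{i-1}|a^{i-1}_k = c)) - f(a^{i-1})$. Since $c = a^{i-1}_k$ is itself in $[C]$ and yields zero change, we have $\delta_{a,i} \ge 0$, so $f(a^i) \ge f(a^{i-1})$, while $f(b^i) = f(b^{i-1})$ holds by construction. The analogous statement handles the active update in the else-branch.

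The harder half is the \emph{passive} update, where the value $c'$ that was optimized for one vector is then copied into the other. In the if-branch this means showing $f(b^{i+1}) = f((b^{i-1}|b^{i-1}_k = c')) \ge f(b^{i-1})$. My plan is to apply the submodular inequality to the pair
\[
x = (b^{i-1}|b^{i-1}_k = c'), \qquad y = (a^{i-1}|a^{i-1}_k = C).
\]
A short coordinate-wise check, using the invariant $a^{i-1} \le b^{i-1}$ and $c' \le C$, identifies $x \lor y = b^{i-1}$ and $x \land y = (a^{i-1}|a^{i-1}_k = c')$. Submodularity then rearranges to
\[
f((b^{i-1}|b^{i-1}_k = c')) - f(b^{i-1}) \ \ge\ f((a^{i-1}|a^{i-1}_k = c')) - f((a^{i-1}|a^{i-1}_k = C)),
\]
and the right-hand side is nonnegative because $c'$ was chosen to maximize $f((a^{i-1}|a^{i-1}_k = c))$ over $c \in [C]$, which includes $c = C$. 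The passive update in the else-branch is handled by the mirror pair $x = (a^{i-1}|a^{i-1}_k = c')$, $y = (b^{i-1}|b^{i-1}_k = 0)$, with the same rearrangement yielding $f((a^{i-1}|a^{i-1}_k = c')) - f(a^{i-1}) \ge f((b^{i-1}|b^{i-1}_k = c')) - f((b^{i-1}|b^{i-1}_k = 0)) \ge 0$, where the last step uses that $c'$ maximizes $f((b^{i-1}|b^{i-1}_k = c))$ over $[C]$, which contains $0$. The main obstacle is really just discovering this pair: it must mix the ``new'' value $c'$ on one side with an extreme value ($C$ or $0$) on the other so that the meet and join land on exactly the four quantities one wants to relate; once the pair is in hand, the proof collapses to one submodular inequality plus the definition of $c'$.
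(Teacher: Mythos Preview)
Your proof is correct and follows essentially the same approach as the paper: the active update is handled by the trivial observation that $\delta_{a,i},\delta_{b,i}\ge 0$, and the passive update is handled by applying the submodular inequality to exactly the pair $x=(b^{i-1}|b^{i-1}_k=c')$, $y=(a^{i-1}|a^{i-1}_k=C)$ (the paper writes the resulting inequality directly without naming $x$ and $y$), then using optimality of $c'$. Your version is slightly more explicit in verifying the invariant $a^{i-1}\le b^{i-1}$ and in identifying the meet and join, but the argument is the same.
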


\begin{proof}
Let $k$ be the component in which the vectors $a^{i-1}$ and $b^{i-1}$ will be changed in an iteration of the loop. 
%
%

By definition, $\delta_{a, i}$ and $\delta_{b, i}$ are nonnegative.
Suppose that $a$ is changed first, i.e.\ $\delta_{a, i} \geq \delta_{b, i}$, then  clearly, 
the lemma holds for $a^{i-1}$ and $a^i$ and $b^i = b^{i-1}$.
Now we show that the change from $b^i$ to $b^{i+1}$ does not lead to a decrease in $f$:

We have
 $f(b^{i-1}|b^{i-1}_{k} = a^i_k) + f(a^{i-1}|a^{i-1}_k = C) \geq f(a^{i-1}|a^{i-1}_k = a^i_k) + f(b^{i-1})$
 by submodularity of $f$. 
 
 As $f(a^{i-1}|a^{i-1}_k = a^i_k) \geq f(a^{i-1}|a^{i-1}_k = c)\ \forall \ a^{i-1}_k \leq c \leq C$, the above implies
 $$f(b^{i-1}|b^{i-1}_{k} = a^i_k) - f(b^{i-1}) 
 \geq f(a^{i-1}|a^{i-1}_k = a^i_k) - f(a^{i-1}|a^{i-1}_k = C) \geq 0.$$
Since $b^{i - 1} = b^i$ and $b^{i + 1} = (b^{i-1}|b^{i-1}_k = a^i_k)$, it follows that $f(b^{i+1}) \geq f(b^i)$ as desired.


The case where $\delta_{b, i} > \delta_{a, i}$ can be shown in the same way.
\end{proof}


Let $OPT$ denote a fixed optimal solution for SMBIL. 
In order to bound the value of a solution of Algorithm \ref{MyGreedy} with respect to the optimum, 
we define $OPT^i = (OPT \lor a^i) \land b^i$ analogous to \cite{DoubleGreedy}. 
Consequently, $OPT^0 = OPT$ and $OPT^{2n} = a^{2n} = b^{2n}$. 
In Lemma \ref{OptBoundLemma}, we show $f(OPT^{i - 1}) - f(OPT^i) \leq f(a^{i }) - f(a^{i-1}) +  f(b^{i} ) - f(b^{i-1})$, 
i.e.\ the decrease in $OPT^i$ (going from $OPT^O$ to $OPT^{2n}$) is bounded for each change of the vectors $a$ or $b$.

Then, as in \cite{DoubleGreedy}, Lemma \ref{OptBoundLemma} can be used to prove Theorem \ref{ApproximationThm}:

$f(OPT^0) - f(OPT^{2n}) =  \sum\limits_{i = 1}^{2n} (f(OPT^{i - 1}) - f(OPT^i ))
\leq \sum\limits_{i = 1}^{2n} \big(f(a^i) - f(a^{i - 1})\big) + \sum\limits_{i = 1}^{2n} \big(f(b^i) - f(b^{i - 1})\big)
= f(a^{2n}) + f(b^{2n}) - f(a^0) - f(b^0) \leq f(a^{2n}) + f(b^{2n}).$

This is equivalent to $f(OPT) \leq 3 f(a^{2n})$ and the algorithm returns $a^{2n}$. 
Since the running time is obvious, 
this concludes the proof of Theorem \ref{ApproximationThm} except for Lemma \ref{OptBoundLemma}: 

\begin{lemma}\label{OptBoundLemma}

For a submodular function $f: [C]^n \rightarrow \R_+$  the following holds in Algorithm \ref{MyGreedy} 
for all $1\leq i\leq 2n$ where $OPT^i:= (OPT \lor a^i) \land b^i$:

\hspace*{2 cm} $f(OPT^{i - 1}) - f(OPT^i) \leq f(a^{i }) - f(a^{i-1}) +  f(b^{i} ) - f(b^{i-1}).$ 
\end{lemma}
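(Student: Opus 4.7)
The plan is to prove the inequality sub-step by sub-step: in any single sub-step only one of $a,b$ actually changes, so exactly one of the two summands on the right-hand side vanishes and the task reduces to bounding the drop of $f$ along the $OPT^i$ sequence by the gain of $f$ on whichever vector moved. Without loss of generality assume $\delta_{a,i}\ge\delta_{b,i}$ in the current outer iteration, so that sub-step $i$ is an $a$-update and sub-step $i{+}1$ a $b$-update (the other case is symmetric). Write $k$ for the coordinate modified in this outer iteration and $c'$ for the common value eventually written into $a_k$ and $b_k$. Then $OPT^{i-1},OPT^i,OPT^{i+1}$ agree on every component except $k$, where they equal $OPT_k$, $\max(OPT_k,c')$, $c'$ respectively.

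The key tool---which crucially does \emph{not} require diminishing returns---is a telescoped form of submodularity. Applying the submodular inequality to the pair $(x|x_k=l{+}1)$ and $(y|y_k=l)$ for any $x\le y$ in $[C]^n$ and $l\in\{0,\dots,C-1\}$ (their join is $(y|y_k=l{+}1)$ and their meet is $(x|x_k=l)$) gives
\[
 f(x|x_k=l{+}1)-f(x|x_k=l)\;\ge\; f(y|y_k=l{+}1)-f(y|y_k=l),
\]
and summing this over any interval of values of $l$ yields the coordinate-wise ``marginal monotonicity'' the remainder of the argument relies on.

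For sub-step $i$ (the $a$-step) $b^i=b^{i-1}$, so the bound to prove reduces to $f(OPT^{i-1})-f(OPT^i)\le\delta_{a,i}$. If $OPT_k>c'$ then $OPT^i=OPT^{i-1}$ and this is trivial; otherwise I apply the telescoped inequality with $x=OPT^{i-1}\le y=b^{i-1}$ over $[OPT_k,c']$ and rearrange to
\[
 f(OPT^{i-1})-f(OPT^i)\;\le\;\bigl[f(b^{i-1}|b_k=OPT_k)-f(b^{i-1})\bigr]-\bigl[f(b^{i-1}|b_k=c')-f(b^{i-1})\bigr].
\]
The first bracket is $\le\delta_{b,i}$ by definition; the second equals $f(b^{i+1})-f(b^{i-1})$, which is nonnegative by Lemma~\ref{IncreaseLemma}; together they give $\le\delta_{b,i}\le\delta_{a,i}$, as required. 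For sub-step $i{+}1$ (the $b$-step) $a^{i+1}=a^i$, so the bound to prove reduces to $f(OPT^i)-f(OPT^{i+1})\le f(b^{i+1})-f(b^i)$, and the right-hand side is $\ge 0$ by Lemma~\ref{IncreaseLemma}. If $OPT_k\le c'$ then $OPT^{i+1}=OPT^i$ and we are done; otherwise applying the telescoped inequality with $x=a^{i-1}\le y=OPT^i$ over $[c',OPT_k]$ gives $f(OPT^i)-f(OPT^{i+1})\le f(a^{i-1}|a_k=OPT_k)-f(a^{i-1}|a_k=c')\le 0$, because $c'$ is a maximizer of $f(a^{i-1}|a_k=\cdot)$.

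The main obstacle is sub-step $i$: the ``textbook'' submodular inequality applied to the pair $(OPT^{i-1},a^i)$ yields only $f(OPT^{i-1})-f(OPT^i)\le\delta_{a,i}-[f(a^{i-1}|a_k=OPT_k)-f(a^{i-1})]$, and without diminishing returns the bracketed correction term can be negative, so this naive approach is insufficient. The fix is to compare $OPT^{i-1}$ with $b^{i-1}$ rather than $a^{i-1}$ via the telescoped coordinate inequality; this routes the residual loss into a term bounded by $\delta_{b,i}$, which the algorithm's greedy choice $\delta_{a,i}\ge\delta_{b,i}$ then swallows.
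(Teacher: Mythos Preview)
Your proof is correct and follows essentially the same route as the paper. Both arguments hinge on the same key step for the $a$-update: applying submodularity to compare $OPT^{i-1}$ with $b^{i-1}$ (your telescoped inequality with $x=OPT^{i-1}$, $y=b^{i-1}$ collapses to the single submodularity application the paper uses on the pair $(OPT^i,\,b^{i-1}|b_k=OPT_k)$), then bounding the resulting $b$-terms via $\delta_{b,i}\le\delta_{a,i}$ and Lemma~\ref{IncreaseLemma}. The only differences are cosmetic: you argue directly where the paper argues by contradiction, and you organize by outer iteration (first the $a$-sub-step, then the $b$-sub-step) whereas the paper organizes by which vector changes and leaves the symmetric case implicit.
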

\begin{proof}
Let us consider the case where  $a^i \not = a^{i- 1}$ and let $k$ be the index, where the vectors differ. 
 If $a^i_k \leq OPT_k$ then $OPT^{i-1}_k = OPT^i_k$ 
 since $b^i_k = b^{i-1}_k$ and thus $OPT^{i-1} = OPT^i$.
 Since Lemma \ref{IncreaseLemma} implies that the right-hand side of the equation is nonnegative, the inequality holds. 
 
So we assume now that $a^i_k > OPT_k$.
Since $b^i_k \geq a^i_k$, we have $OPT^i_k = a^i_k$, all other entries of $OPT^{i-1}$ remain unchanged.
Moreover, $a^{i-1}_k = 0$, so $OPT^{i-1}_k = OPT_k$.
%
%
%
%
Submodularity of $f$ implies
\begin{align*}
      f(OPT^{i}) &+ f(b^{i-1}|b^{i-1}_{k} = OPT_k) \geq f(OPT^{i - 1}) + f(b^{i-1}|b^{i-1}_{k} = a^i_k) \numberthis \label{eq1}\\
\Leftrightarrow f(OPT^{i}) &+ f(b^{i-1}|b^{i-1}_{k} = OPT_k) - f(b^{i-1}) \\
      &\geq f(OPT^{i - 1}) + f(b^{i-1}|b^{i-1}_{k} = a^i_k) - f(b^{i-1}).
\end{align*}

Suppose that  $f(OPT^{i - 1}) - f(OPT^i) > f(a^{i }) - f(a^{i-1})$, otherwise we are done.

Using this assumption and \eqref{eq1} yields
\begin{align*}
&f(b^{i-1}|b^{i-1}_{k} = OPT_k) - f(b^{i-1}) + f(OPT^{i - 1}) - (f(a^{i}) - f(a^{i - 1}))\numberthis \label{eq2}\\
> &f(b^{i-1}|b^{i-1}_{k} = OPT_k) - f(b^{i-1})+ f(OPT^i)\\
\geq &f(OPT^{i - 1}) + f(b^{i-1}|b^{i-1}_{k} = a^i_k) - f(b^{i-1}).
\end{align*} 

But we have $f(a^{i}) - f(a^{i-1 }) \geq f(b^{i-1}|b^{i-1}_{k} = c) - f(b^{i-1}) \ \forall \ a^{i-1}_k \leq c \leq b^{i-1}_k$.
This is true by design of the algorithm for the first change in an iteration of the loop. 
For the second change in an iteration, the other vector (in this case $b^{i - 1}$) cannot improve further, so the claim holds as well. 

Setting $c = OPT_k$ and \eqref{eq2} imply
\begin{align*}
 0 &\geq f(b^{i-1}|b^{i-1}_{k} = OPT_k) - f(b^{i-1}) - (f(a^{i}) - f(a^{i - 1})) \numberthis \label{eq2}\\ 
 &> f(b^{i-1}|b^{i-1}_{k} = a^i_k) - f(b^{i-1}). 
\end{align*} 

But in the proof of Lemma \ref{IncreaseLemma}, we have shown that 
$f(b^{i-1}|b^{i-1}_{k} = a^i_k) - f(b^{i-1}) \geq 0$, so \eqref{eq2} is a contradiction.
 
The case $b^i \not = b^{i- 1}$ can be treated analogously and if neither case applies, then clearly 
$OPT^i = OPT^{i-1}$ as well.  
\end{proof}


\subsection{The guarantee of $\frac{1}{3}$ is tight}\label{subsec:TightExampleSec}
Since Algorithm \ref{MyGreedy} generalizes the deterministic algorithm in \cite{DoubleGreedy}, 
the tight example they provide also works for our algorithm. But our example has a few additional properties: 
First, even if we do not prescribe the order of the components in advance 
and instead choose the index that results in the biggest increase in $f$ for each step, 
the algorithm does not yield a better solution (see Theorem \ref{AlgoTightThm} below) which is not true for 
 the example provided in \cite{DoubleGreedy}.
However, it is not possible to attain a $\frac{1}{2}$-approximation by choosing a particular good order. 
This was shown by Huang and Borodin in \cite{BorodinMyopicBounds}. 
They examine classes of Double Greedy algorithms and provide an upper bound of 
$\frac{2}{3\sqrt{3}} + \epsilon \approx 0.385+\epsilon$ on approximation ratios achievable by those algorithms. 
These include in particular the algorithm described above. 
Moreover, we will show in the next subsection that our example also has implications when trying to generalize the 
$\nicefrac{1}{2}$-approximation presented in \cite{DoubleGreedy}.
%
\begin{theorem}
\label{AlgoTightThm}
 For an arbitrarily small constant $\varepsilon > 0$ there exists a submodular function 
 for which Algorithm \ref{MyGreedy} provides only a $\frac{1}{3} + \varepsilon$-approximation.
 This result still holds if the components are ordered such that at any time the
 chosen component yields the biggest increase in $f$.
\end{theorem}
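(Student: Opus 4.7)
My plan is to construct an explicit symmetric submodular function $f:[C]^n\to\R_+$ that simultaneously defeats both Algorithm \ref{MyGreedy} with a fixed ordering and the greedy-best-component variant. The organising idea is coordinate symmetry: if $f$ is invariant under permutations of the components, then at the initial step every index $i$ yields the same $\delta_{a,i}$ and the same $\delta_{b,i}$, so choosing the coordinate with the largest marginal gain is equivalent to an arbitrary tie-break. By designing $f$ so that this symmetry is preserved after each pair of updates (e.g.\ so the residual function on the yet-untouched coordinates is again symmetric), the ``greedy-best'' variant will be forced along the very same trajectory as Algorithm \ref{MyGreedy} with the canonical ordering $1,2,\ldots,n$, and we can then borrow the analysis of the Buchbinder et al.\ tight instance.

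Concretely, I would start from the deterministic tight instance for Double Greedy on the Boolean lattice from \cite{DoubleGreedy} and lift it to $[C]^n$ by scaling each coordinate to $\{0,1,\ldots,C\}$ and using piecewise-linear (``cut-type'') extensions, which are known to preserve submodularity. A candidate template is
\[
f(x)\;=\;\alpha\sum_{i}\min(x_i,C-x_i)\;+\;\beta\sum_{i<j}|x_i-x_j|\;+\;\gamma\cdot g(x),
\]
where $g$ is a small symmetric perturbation designed to place OPT at an asymmetric configuration (e.g.\ an extreme vertex along one coordinate) while keeping the greedy trajectory along the diagonal $x_1=\cdots=x_n$. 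Submodularity of each summand follows from the well-known fact that $\min$, $\max$, and $|{\cdot}|$ of affine forms in two coordinates are submodular; then $f$ is a nonnegative combination of submodular functions, hence submodular.

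The verification then has three steps. First, check that $f$ is indeed nonnegative and submodular on $[C]^n$ by case analysis on $x\lor y$ and $x\land y$. Second, trace the algorithm: by the permutation symmetry, at iteration $k$ every remaining coordinate attains the same $\delta_{a}$ and $\delta_{b}$, so the greedy-best rule coincides with the fixed order, and the parameters $\alpha,\beta,\gamma$ can be tuned so the algorithm increments each touched coordinate to the ``diagonal'' value, ending at $a^{2n}$ with $f(a^{2n})$ close to $\frac{1}{3}\text{OPT}$. Third, evaluate OPT on the asymmetric configuration singled out by $g$ and show the ratio tends to $\nicefrac{1}{3}$ as $\gamma\to 0$, i.e.\ the ratio is at most $\nicefrac{1}{3}+\varepsilon$ for any desired $\varepsilon>0$.

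The main obstacle, I expect, is designing the perturbation $g$: it must be small enough to leave submodularity and the symmetric greedy trajectory intact, yet biased enough to pull OPT to a vector of value exactly $3\,f(a^{2n})-O(\varepsilon)$. In particular, one has to rule out the possibility that, after breaking the initial tie, the greedy-best rule detects the asymmetry introduced by $g$ and escapes to a better solution; this will force $g$ to be supported only on configurations that the greedy trajectory never reaches, which is the most delicate part of the construction.
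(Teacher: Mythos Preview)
Your proposal is a plan, not a proof, and the plan leaves precisely the decisive step open. You never exhibit an explicit function: the parameters $\alpha,\beta,\gamma$ are ``to be tuned'' and the perturbation $g$ is left entirely unspecified, even though you yourself identify designing $g$ as ``the most delicate part of the construction.'' Without a concrete $f$, none of the three verification steps can actually be carried out. In particular, for your template $\alpha\sum_i\min(x_i,C-x_i)+\beta\sum_{i<j}|x_i-x_j|$, a quick check of the first iteration (with $a=\mathbf 0$, $b=C\mathbf 1$) shows that both $\delta_{a,1}$ and $\delta_{b,1}$ are driven by the $|x_i-x_j|$ terms toward the extreme values $c=C$ or $c=0$, not toward the ``diagonal'' value $C/2$ you intend; so the claimed greedy trajectory is not obviously the one the algorithm follows. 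Your plan to ``lift'' the Buchbinder et al.\ tight instance is also risky here: the paper explicitly notes that the \cite{DoubleGreedy} example does \emph{not} survive the greedy-best-component variant, which is exactly the additional property the theorem demands.

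The paper takes the opposite geometric route and gives a fully explicit example. It works on the tiny lattice $\{0,1,2\}^3$, starting from the symmetric function $f'(a)=\min\{\,|\{i:a_i>0\}|,\ |\{i:a_i<2\}|\,\}$, whose maximum is at the \emph{center} $(1,1,1)$ with value $3$. An $\varepsilon$-perturbation on a handful of extreme vectors then lures the algorithm to set the first processed coordinate to $2$ (gain $1+\varepsilon$), after which it is forced to a vector of value $1+\varepsilon$. Because the perturbation is placed so that \emph{every} coordinate offers the same maximal gain $1+\varepsilon$ at the first two steps, the greedy-best-component rule cannot escape the trap either. Thus the paper's construction has OPT at the center and the algorithm at an extreme---the reverse of your picture---and it is small enough that submodularity and the trajectory can be verified by direct inspection rather than by tuning free parameters.
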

\begin{figure}[htb]

	 \tikzstyle{vertex}=[circle, draw,
                        inner sep=1pt, minimum width=13pt]
	\tikzstyle{edge} = [draw,thick]
	\centering
\tikzstyle{vertex}=[circle, draw,
                        inner sep=1pt, minimum width=13pt]
	\tikzstyle{edge} = [draw,thick]
	
  	\centering
	\begin{tikzpicture}[scale = 0.8]
	\foreach \x/\y/\nr/\f/\set in {{0/0/0/0/\tiny $(0, 0, 0)$}, {-2/1/1/1/\tiny $(1, 0, 0)$}, {0/1/2/1/\tiny$(0, 1, 0)$}, {2/1/3/1/\tiny$(0, 0, 1)$}, 
				  {-5/3/4/\tiny $1+\varepsilon$/\tiny $(2, 0, 0)$}, {-3/3/5/2/\tiny $(1, 1, 0)$},
				  {-1/3/6/2/\tiny $(1, 0, 1)$}, {1/3/7/2/\tiny $(0, 1, 1)$}, 
				  {3/3/8/\tiny $1+\varepsilon$/\tiny $(0, 2, 0)$}, {5/3/9/\tiny $1+\varepsilon$/\tiny $(0, 0, 2)$}, 
				  {-6/5/10/2/\tiny $(2, 1, 0)$}, {-4/5/11/\tiny $1+\varepsilon$/\tiny $(2, 0, 1)$}, 
				  {-2/5/12/3/\tiny $(1, 1, 1)$},{0/5/13/2/\tiny $(1, 2, 0)$}, 
				  {2/5/14/\tiny $1+\varepsilon$/\tiny $(0, 2, 1)$},{4/5/15/2/\tiny $(1, 0, 2)$}, {6/5/16/2/\tiny $(0, 1, 2)$}, 
				  {-5/7/17/2/\tiny $(2, 1, 1)$}, {-3/7/18/\tiny $1+\varepsilon$/\tiny $(2, 2, 0)$}, 
				  {-1/7/19/\tiny $1+\varepsilon$/\tiny $(2, 0, 2)$}, 
				  {1/7/20/\tiny $1+\varepsilon$/\tiny $(0, 2, 2)$}, {3/7/21/2/\tiny $(1, 2, 1)$}, 
				  {5/7/22/2/\tiny $(1, 1, 2)$}, {-2/9/23/1/\tiny $(2, 2, 1)$}, {0/9/24/1/\tiny $(2, 1, 2)$}, 
				  {2/9/25/1/\tiny $(1, 2, 2)$}, {0/10/26/0/\tiny $(2, 2, 2)$}}
	  {
	   \node[vertex](v\nr) at (\x, \y){\scriptsize \f};
	   \draw(\x + 0.25, \y) node[anchor = west]{\scriptsize \set};
	  }  
	 \foreach \i/\j in {{0/1}, {0/2}, {0/3}, {1/4}, {1/5}, {1/6}, {2/5}, {2/7}, {2/8}, {3/6}, {3/7}, {3/9}, 
			     {4/10}, {4/11},{5/10}, {5/12}, {5/13}, {6/11}, {6/12}, {6/15}, {7/12}, 
			     {7/14}, {7/16}, {8/13}, {8/14}, {9/15}, {9/16}, 
			     {10/17}, {10/18}, {11/17}, {11/19}, {12/17}, {12/21}, {12/22}, {13/18}, {13/21}, 
			     {14/20}, {14/21}, {15/19}, {15/22}, {16/20}, {16/22}, 
			     {17/23}, {17/24}, {18/23}, {19/24}, {20/25}, {21/23}, {21/25}, {22/24}, {22/25}, 
			     {23/26}, {24/26}, {25/26}}
 	      \draw (v\i) edge (v\j);

	\end{tikzpicture}

\caption{Each node corresponds to a vector in $\{0, 1, 2\}^3$, written next to it, with the value of $f$ within the node. 
The picture can be read similar to Hasse diagrams of posets: 
There is a line connecting two vectors, if their $L_1$-distance equals 1. If we think of the edges as being directed upwards, 
then the meet of two vectors is their biggest common predecessor, and their join the smallest common successor.}
\label{fig:TightExample}
\end{figure}
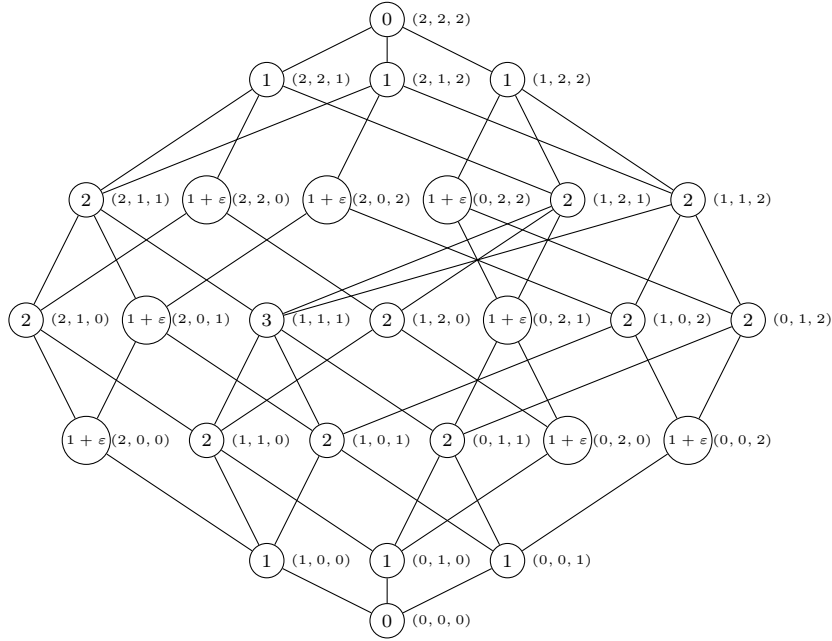

\begin{proof}
Consider the following submodular function $f: \{0, 1, 2\}^3 \rightarrow \R_+ $. 
Let $f'(a) = \min \{|\{ a_i: a_i > 0\}|, |\{ a_i: a_i < 2\}| \}$. Now we define 
$f(a) = 1+\varepsilon$ if $a$ consists of either the entries $2, 0, 2$ or $2, 0, 0$ in any order. 
We set $f(2, 0, 1) = f(0, 2, 1) = 1+\varepsilon$. For all other vectors, we set $f(a) = f'(a)$. 
It can be checked that $f$ is indeed submodular. 
To give a better intuition, Figure \ref{fig:TightExample} illustrates this function.

We analyze Algorithm \ref{MyGreedy} for this instance: 
Obviously, the optimum is the vector $(1, 1, 1)$ of value 3, 
but Algorithm \ref{MyGreedy} terminates with a set of value $1 + \varepsilon$: 

In the first iteration, the maximal possible gain in $f$ is $1+\varepsilon = \delta_{a, 1} = \delta_{b, 1}$. 
So we we set $a^1_{1} = 2$ and $b^2_{1} = 2$. 
Next, $\delta_{b, 3} = 1+\varepsilon > \delta_{a, 3} = 0$ and thus $b^3_{2} = 0$ and $a^4_{2} = 0$. 

Therefore, we now have $a^4 = (2, 0, 0)$ and $b^4 = (2, 2, 0)$, both of value $1 + \varepsilon$ 
and the algorithm returns the vector $a^6 = b^6 = (2, 2, 0)$ of value $1 + \varepsilon$. 

Since the maximal possible gain in $f$ is $1+\varepsilon$ for the first two indices independent of the order,
the same order could have been chosen by an algorithm that always processes the index which maximizes the gain in $f$.
Moreover, the tie-braking rule for $\delta$ 
does not influence the value of the output here.
\end{proof}

While Algorithm \ref{MyGreedy} went ``too far'' when choosing the next entry, 
a variant of the algorithm that only changes the vectors by one at a time can be arbitrarily bad here. To see this, consider the example on $[2]^2$ in Figure 
\ref{fig:BadLatticeOneGreedy}. If our algorithm was only allowed to add $1$ to $a$ or delete $1$ from $b$ and chose the better of these two options, 
the output could not be $(0, 2)$ which has an arbitrarily high value of $x$. 
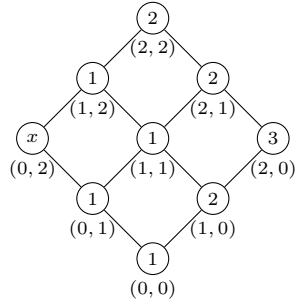
\begin{figure}[tb]

	 \tikzstyle{vertex}=[circle, draw,
                        inner sep=1pt, minimum width=12pt]
	\tikzstyle{edge} = [draw,thick]
	\centering
	\begin{tikzpicture}[scale = 0.8]
	\foreach \x/\y/\nr/\f in {{0/0/0/1}, {-1/1/1/1}, {1/1/2/2}, {-2/2/3/$x$}, {0/2/4/1}, {2/2/5/3}, 
				  {-1/3/7/1}, {1/3/8/2}, {0/4/11/2}}
	   \node[vertex](v\nr) at (\x, \y){\scriptsize\f};

	   \draw (0,-0.2) node[anchor = north]{\scriptsize $(0,0)$};
	   \draw (-1,0.8) node[anchor = north]{\scriptsize $(0, 1)$};
	   \draw (1,0.8) node[anchor = north]{\scriptsize $(1, 0)$};
	   \draw (-2, 1.8) node[anchor = north]{\scriptsize $(0, 2)$};
	   \draw (0, 1.8) node[anchor = north]{\scriptsize $(1, 1)$};
	   \draw (2, 1.8) node[anchor = north]{\scriptsize $(2, 0)$};
	   \draw (-1,2.8) node[anchor = north]{\scriptsize $(1, 2)$};
	   \draw (1,2.8) node[anchor = north]{\scriptsize $(2, 1)$};
	   \draw (0,3.8) node[anchor = north]{\scriptsize $(2, 2)$};
	 \foreach \i/\j in {{0/1}, {0/2}, {1/3}, {1/4}, {2/4}, {2/5}, {3/7}, {4/7}, {4/8}, {5/8},
			      {7/11},{8/11}}
	      \draw (v\i) edge (v\j);
	  \draw [white] (3,1) circle (1mm);    
	   
	\end{tikzpicture}
\caption{The values in the nodes give a submodular function for any $x \geq 1$.}
\label{fig:BadLatticeOneGreedy}
\end{figure}

\subsection{Difficulties in Designing Algorithms}\label{sec:BadRandom}
We already mentioned that submodularity is a much weaker property on the integer lattice than on set functions. 
This also makes it hard to generalize some algorithms without assuming additional properties like diminishing returns. 
Consider for example the one-dimensional integer lattice, on which any function is submodular. Therefore, we cannot expect 
running times that are better than pseudopolynomial for any algorithm with bounded approximation ratio, unless the function fulfills other properties as well.
This example also illustrates that some structural properties that are key to designing good algorithms on set functions do not hold any more. 
One such result is used in the local search approach presented in \cite{LocalSearch}. The authors rely on a generalization of a result about local optima, 
stating that the value of a local optimum $x$ is at least as large as the value of all sub- and supersets. 
For the integer lattice, this statement is no longer true. 
Moreover, ``continuous Greedy algorithms'' that were used in various settings and also provided a better approximation ratio for USM,
rely on component-wise concavity along non-negative directions 
for the so-called multilinear relaxation. Since such a relaxation coincides with the function values on integer points, this property cannot be obtained 
without stronger assumptions like DR. 

As a generalization of the Double Greedy approach was possible, 
it seems reasonable to ask whether further results using similar techniques can also be generalized to SMBIL. 
For USM, \cite{DoubleGreedy} also presents a randomized ``Double Greedy'' algorithm which gives a guarantee of $\frac{1}{2}$. 
They decide with probability proportional to the increase in $f$ whether to add a given element to one set or delete it from the other. 
In the context of incidence vectors, this is equivalent to choosing whether entries $a_i$ and $b_i$ are set to $0$ or $1$. 
We show that a similar analysis cannot work if we adapt their idea to our algorithm.
We consider two natural strategies of generalizing the randomized algorithm above. One is the following:
For given vectors $a, b$ and an index $i$ we can consider all possibilities to increase $a_i$ or decrease $b_i$ such that
$a \leq b$ remains true and choose one of these possibilities at random (again proportional to the increase in $f$). 
We will show this leads to arbitrarily bad solutions. 

The other alternative is more similar to our previous algorithm: We determine the best choice for $a$ and $b$ and then choose
between the two options with probability proportional to the $\delta$-values, i.e.\ the maximal possible gain in $f$.
While it is possible that this actually is a $\nicefrac{1}{2}$-approximation, 
we can so far only show that this randomized algorithm gives the same guarantee as the deterministic version.
Indeed, we will show that an analysis as in \cite{DoubleGreedy} which bounds the expected decrease of 
$f(OPT^i)$ in each step cannot prove a guarantee better than $\nicefrac{1}{3}$ by examining the example 
presented in Figure \ref{fig:TightExample}. 

So, while it might still be true that this randomized version of our algorithm actually is a $\nicefrac{1}{2}$-approximation 
(for example, the worst case expected value in the given example is $1.5 + \varepsilon$ while the optimum is $3$), 
we would require an analysis that takes a more global view. 

We should note that for the Boolean lattice, both these approaches are identical and  correspond to the randomized algorithm in 
\cite{DoubleGreedy}. We now analyze both algorithms. 

\subsubsection{Randomized choice over all possible solutions}\label{subsec:RandBadAllPossible}

\begin{lemma}\label{AllRandChoicesBadLemma}
Consider the following randomized version of Algorithm 1: Assume an order of indices and 
consider index $i$. While $a_i < b_i$, randomly set either $a = (a|a_i = c)$ or $b = (b|b_i = c)$ where the choice is made over all 
$a_i \leq c\leq b_i$ proportional to the increase in $f$ if it is positive. 
If the increase in $f$ is non-positive for all possibilities, we arbitrarily choose an option where the increase in $f$ is $0$, 
but $a$ or $b$ are changed.     
This algorithm for SMBIL can perform arbitrarily bad.

\end{lemma}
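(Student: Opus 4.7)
The plan is to exhibit a single parametrized family of submodular functions on the one-dimensional bounded integer lattice $[C]$ whose expected output under the randomized algorithm stays bounded while the optimum grows without bound. Since every real-valued function on a chain is automatically submodular, the construction reduces to choosing convenient values. I would take $f_C : [C] \to \R_+$ with $f_C(0) = 0$, $f_C(1) = M$, and $f_C(c) = 1$ for every $2 \le c \le C$, where $M := \sqrt{C}$. Then $OPT = f_C(1) = M$, which tends to infinity with $C$.

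First I analyze the very first iteration of the while loop, starting from $a = 0$, $b = C$. Setting $a = c$ has increase $f_C(c)$, and setting $b = c$ has increase $f_C(c) - f_C(C) = f_C(c) - 1$. A direct enumeration then shows that the strictly positive options are exactly $a = 1$ with weight $M$, $b = 1$ with weight $M - 1$, and $a = c$ for $c \in [2, C]$ with weight $1$ each; every candidate $b = c$ with $c \in [2, C-1]$ yields zero increase and is therefore never drawn by the random rule. Writing $p := (2M - 1)/(2M + C - 2)$, the probability that the first random draw commits $a_1 = 1$ or $b_1 = 1$ is exactly $p$.

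Next I would establish the dichotomy driving the bound. If the first draw hits value $1$ in either coordinate, then the only remaining positive option in the subsequent steps forces convergence at $a = b = 1$, because $f_C(c) - f_C(1) < 0$ for every $c \ne 1$; hence the output is $M$. If instead the first draw commits $a = c_1$ for some $c_1 \in [2, C]$, then the algorithm is trapped: the configuration $a_1 \ge 2$ rules out ever reaching value $M$ (setting $b_1 = 1$ would require $a_1 \le 1$), and starting from $(c_1, C)$ every remaining option has increase $0$. The arbitrary tie-breaking rule therefore only shrinks the interval $[a_1, b_1]$ inside $[c_1, C]$, producing output value $1$ irrespective of the rule.

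Combining the two branches yields $\mathbb{E}[f_C(\mathrm{output})] = p M + (1-p) = \bigl((2M-1)M + (C-1)\bigr)/(2M + C - 2)$. Substituting $M = \sqrt{C}$, both numerator and denominator are of order $C$, so the expectation stays bounded by a constant as $C \to \infty$, while $OPT = \sqrt{C}$ grows unboundedly, and the approximation ratio tends to $0$. The most delicate design point, which I would double-check carefully, is the requirement that no $b = c$ move with $c \ge 2$ have strictly positive increase; this is exactly why I set $f_C(C) = 1$ rather than $0$, closing off a spurious rescue path in which the algorithm could still drag $b$ down toward $1$ after a bad first pick, and is what makes the trapped branch produce value $1$ deterministically.
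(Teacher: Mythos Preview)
Your proof is correct and in fact cleaner than the paper's. The paper builds a two-dimensional instance on $[C]^2$ with $f(0,0)=f(C,C)=0$, $f(C,0)=1$, and $f=\varepsilon$ elsewhere, then verifies submodularity by hand and bounds the probability that the algorithm ends at a value-$\varepsilon$ point via an estimate involving a harmonic sum. You instead work on a single chain $[C]$, where submodularity is vacuous, and place a spike of height $\sqrt{C}$ at $1$ over a plateau of height $1$. Both constructions exploit the same phenomenon---one good move drowned by $\Theta(C)$ mediocre moves whose positive increments steal probability mass---but your one-dimensional version removes the need to check the lattice inequality, gives an exact closed form for the expectation, and makes the ``trapped'' branch completely deterministic once $a_1\ge 2$. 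What the paper's two-dimensional example buys is that it also illustrates the failure for the natural variants mentioned right after their proof (randomizing over pairs of values for $a_i,b_i$), whereas on a chain those variants degenerate; for the lemma as stated, however, your argument suffices and is more elementary.
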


\begin{proof}
First, we remark that in the case where the increase in $f$ is non-positive everywhere, a choice as above exists.    
We now analyze the following instance: Given  $C \in \mathbb{N}$, $\varepsilon > 0$, 
 we define a submodular function $f: [C]^2 \rightarrow \mathbb{R}$ as follows

$f(x) = \begin{cases}
	0,  & \text{if } x = (0, 0) \text{ or } x = (C, C) \\
	1 & \text{if }	x = (C, 0)\\
	\varepsilon & \text{else}
	\end{cases}$.
	
Our randomized algorithm starts with $a = (0, 0)$ and $b = (C, C)$. 
We start with index 1. Until $a_1 = b_1$, we change entry $a_1$ or $b_1$ and choose a value such that $a_1 \leq b_1$ 
at random proportional to the increase in $f$. 
I.e., for the first step, we have $2C$ options where the increase in $f$ is positive, 
all but one (setting $a_1 = C$) will be taken with probability $\frac{\varepsilon}{(2C-1)\varepsilon + 1}$. 


After the first step, either $b_1 < C$ or $a_1 > 0$. In the second case, if $a_1 < C$, the only options now to increase $f$ are 
changing $b_1$ or setting $a_1 = C$.  
Therefore, after two steps either $a_1 = C$ or $b_1 < C$ and depending on which of these holds, the return value will be $1$ or $\varepsilon$. 
Thus, the probability for the algorithm to return a vector of value $\varepsilon$ is
\begin{align*}
&Pr[b \text{ is changed first}] + \sum\limits_{i = 1}^{C-1}Pr[ a_1 \text{ is set to }i \text{ first, then } b \text{ is changed}]\\
  = &\frac{C\varepsilon}{(2C-1)\varepsilon + 1} + 
\frac{\varepsilon}{(2C-1)\varepsilon + 1} \sum\limits_{i = 1}^{C-1} \frac{(C-i)\varepsilon}{(C-i)\varepsilon + 1-\varepsilon}
\end{align*}
We now show that this expression converges to one for $C\rightarrow \infty$, thus showing that 
the expected return value of the algorithm converges to $\varepsilon$.

To achieve that, it suffices to find a lower bound converging to one. After rewriting the above expression we obtain: 
 \begin{align*}
&\frac{C\varepsilon}{(2C-1)\varepsilon + 1} + 
    \frac{\varepsilon}{(2C-1)\varepsilon + 1} \sum\limits_{i = 1}^{C-1} \frac{(C-i)\varepsilon}{(C-i)\varepsilon + 1-\varepsilon}\\
\geq &\frac{C\varepsilon}{2C\varepsilon + 1} 
    + \frac{\varepsilon}{(2C-1)\varepsilon + 1} \sum\limits_{i = 1}^{C-1}\frac{(C-i)\varepsilon}{(C-i)\varepsilon + 1}\\
= &\frac{C\varepsilon}{2C\varepsilon + 1}
    + \frac{\varepsilon}{(2C-1)\varepsilon + 1} \sum\limits_{i = 1}^{C-1} \Big(1 - \frac{1}{(C-i)\varepsilon + 1}\Big)\\
\geq &\frac{C\varepsilon}{2C\varepsilon + 1}
    + \frac{(C-1)\varepsilon}{(2C-1)\varepsilon + 1} 
    - \frac{\varepsilon}{(2C-1)\varepsilon + 1} \sum\limits_{i = 1}^{C-1} \frac{1}{(C-i)\varepsilon}\\
=  &\frac{C\varepsilon}{2C\varepsilon + 1}
    + \frac{(C-1)\varepsilon}{(2C-1)\varepsilon + 1} 
    - \frac{1}{(2C-1)\varepsilon + 1} \sum\limits_{i = 1}^{C-1} \frac{1}{i}\\
\geq & \frac{C\varepsilon}{2C\varepsilon + 1}
     + \frac{(C-1)\varepsilon}{(2C-1)\varepsilon + 1} 
     - \frac{\ln(C-1) + 1}{(2C-1)\varepsilon + 1} 
     \rightarrow \frac{1}{2} + \frac{1}{2} - 0 = 1 \text{ for }C \rightarrow \infty\\
\end{align*}
For the last inequality, we used the fact that the partial sums of the harmonic series can be bounded by $\ln$, 
the rest is simple arithmetic.  
\end{proof}
Note that variations like randomizing over combinations of values for $a_i$ and $b_i$ 
and choosing proportional to the sum of the increases in $f$ show a similar behavior. 

\subsubsection{Randomized choice over the best possible solutions}\label{subsec:RandBadBestOptions}

\begin{lemma}
Consider the following randomized version of Algorithm 1: Instead of deciding to change $a^{i-1}$ or $b^{i-1}$ depending on whether
$\delta_{a, i} \geq \delta_{b, i}$, 
randomly change $a$ or $b$ proportional to $\max\{0, \delta_{a, i}\}$ and $\max\{0, \delta_{b, i}\}$.
Then adapt the other vector as before. 
For this algorithm, there is no constant $c < 1$ such that 
$E[f(OPT^{i-1}) - f(OPT^i)] \leq c\cdot E[f(a^i) - f(a^{i-1}) + f(b^i) - f(b_{i-1})]$.
\end{lemma}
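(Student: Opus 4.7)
The plan is to reuse the submodular function $f\colon \{0,1,2\}^3 \to \R_+$ from Figure \ref{fig:TightExample} and exhibit a violation already at the very first sub-step $i = 1$, by directly computing both sides of the claimed inequality and observing that their ratio tends to $1$ as $\varepsilon \to 0$.

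Starting from $a^0 = (0,0,0)$ and $b^0 = (2,2,2)$, both of value $0$, I would first read off from Figure \ref{fig:TightExample} that the best single-coordinate change in coordinate $1$ yields $\delta_{a,1} = f(2,0,0) = 1+\varepsilon$, uniquely attained at $c' = 2$, and symmetrically $\delta_{b,1} = f(0,2,2) = 1+\varepsilon$, uniquely attained at $c' = 0$. Since both positive gains are equal, the randomized rule picks between modifying $a$ and $b$ with probability $\tfrac12$ each. Next, I would compute $OPT^1 = (OPT \lor a^1) \land b^1$ in each branch with $OPT = (1,1,1)$: the $a$-branch produces $a^1 = (2,0,0)$, $b^1 = b^0$, and $OPT^1 = (2,1,1)$, while the $b$-branch produces $a^1 = a^0$, $b^1 = (0,2,2)$, and $OPT^1 = (0,1,1)$. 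Figure \ref{fig:TightExample} shows that both of these intermediate optima have $f$-value $2$, whereas $f(OPT^0) = 3$. Therefore, in every branch the left-hand side of the inequality equals $f(OPT^0) - f(OPT^1) = 1$ while the right-hand side equals $1+\varepsilon$ (the full gain is concentrated in whichever of $a$ or $b$ was modified, and the other vector is untouched). Taking expectations yields
\[
\frac{E[f(OPT^0) - f(OPT^1)]}{E[f(a^1) - f(a^0) + f(b^1) - f(b^0)]} \;=\; \frac{1}{1+\varepsilon}.
\]

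To conclude, for any fixed $c < 1$ I would choose $\varepsilon$ with $0 < \varepsilon < (1-c)/c$, so that $1/(1+\varepsilon) > c$; this falsifies the claimed inequality at step $i = 1$. I do not anticipate a genuine obstacle: the only point to watch is that the two random branches must produce the same unfavorable ratio so that the expectation does not average it away. This is precisely the symmetry engineered into the example of Theorem \ref{AlgoTightThm}, where the two locally optimal greedy moves from the lattice endpoints $(0,0,0)$ and $(2,2,2)$ have identical gain $1+\varepsilon$, yet each cuts the true optimum at $(1,1,1)$ down to a vector of value exactly $2$, regardless of which endpoint was moved.
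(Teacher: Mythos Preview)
Your proof is correct and follows essentially the same approach as the paper: both use the tight example of Figure~\ref{fig:TightExample}, analyze the first sub-step where $\delta_{a,1}=\delta_{b,1}=1+\varepsilon$ so each branch has probability $\tfrac12$, and compute $E[f(OPT^0)-f(OPT^1)]=1$ versus $E[f(a^1)-f(a^0)+f(b^1)-f(b^0)]=1+\varepsilon$. Your write-up is slightly more explicit than the paper's in spelling out the choice of $\varepsilon$ for a given $c<1$, but the argument is identical.
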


Unlike in \cite{DoubleGreedy}, where the above statement is true with $c = \nicefrac{1}{2}$,
an analysis that bounds the expected decrease of $OPT^i$ by the expected increase in $a$ and $b$ in each step cannot
yield an approximation factor better than $\nicefrac{1}{3}$. 

\begin{proof}
As before, $OPT^i$ is defined as $(OPT \lor a^i)\land b^i$ for a fixed optimal solution $OPT$.  
Consider the example presented in the previous section in Figure \ref{fig:TightExample}. 
No matter how we choose the first index $j$, $a_j = 2$ or $b_j = 0$ both with equal probability. 
Thus, $OPT^1$ consists of the entries $0, 1, 1$ or $2, 1, 1$ and thus has value 2 in both scenarios which implies 

$E[f(OPT^0) - f(OPT^1)] = \frac{1}{2}\cdot((3 - 2) + (3-2)) = 1$ and    
$E[f(a^1) - f(a^0) + f(b^1) - f(b_0)] = 2\cdot \frac{1}{2} \cdot (1 + \varepsilon)$.

Note that the statement is true no matter how the order of indices is chosen.
\end{proof}

\section{DR-submodularity on Distributive Lattices: Algorithms}\label{sec:SM-DR-DL-algos}
In the rest of this paper, we consider more general domains, namely distributive lattices, but only DR-submodular functions. 
In this section, we examine several problems, that are well-understood for set functions and DR-submodular functions on the integer lattice 
and propose algorithms.
First, we investigate maximizing monotone functions subject to a poset matroid constraint. 
Moreover, we consider the special case of uniform poset matroids, 
i.e.\ a cardinality constraint. Then, we generalize the double greedy approach to the unconstrained case.    

\subsection{Monotone maximization subject to a poset matroid constraint}

Given a poset $P = (\cN, \preceq)$,  let $f: \cD(P) \rightarrow \R_+$ be DR-submodular and monotone.
For a poset matroid $(P, \cF)$, we consider the problem
$$\max\{f(S) : S \in \cF\}.$$ 
We show a $\nicefrac{1}{2}$-approximation guarantee for a general poset matroid and 
a guarantee of $(1 - \nicefrac{1}{e})$ for a cardinality constraint, which is the special case of a uniform poset matroid. 
These guarantees were shown by Fisher, Nemhauser and Wolsey for matroid constraints on the Boolean lattice (\cite{FisherGreedyI},\cite{FisherGreedyII}) and
 the cardinality result is also presented in \cite{KrauseSurvey}.  
For cardinality constraints, this is tight in the sense that any algorithm evaluating $f$ 
a polynomial number of times cannot yield a better approximation guarantee than $(1 - \nicefrac{1}{e})$ \cite{FisherGreedyI}. 
There is also a $(1 - \nicefrac{1}{e})$-approximation for a matroid constraint (\cite{MonotoneMatroid1}) on the Boolean lattice 
and a $(1 - \nicefrac{1}{e} - \epsilon)$-approximation for a polymatroid constraint on the integer lattice.
However, these are based on continuous greedy algorithms and rounding a vector in a downward closed polytope, 
and it is not clear how to generalize these techniques for distributive lattices as we have no appropriate polyhedral description of poset matroids.  

We start by describing and analyzing the algorithm for poset matroid constraints. 
Then, we give a better analysis for a cardinality constraint. 
The algorithm is a simple Greedy procedure: Starting from the empty set, 
we consider a minimal element among those not yet processed in the partial order
and choose the one that yields the biggest increase in $f$, compared to the current set. 
If including this element would result in an independent set w.r.t.\ the poset matroid, 
we add it, otherwise, we discard the element. This procedure yields a linear extension $x_1, \ldots x_n$ of the partial order. 
A formal description can be found in Algorithm \ref{Algo:SMMatroid}. 

\begin{algorithm2e}
\DontPrintSemicolon
\textbf{Input:} A poset matroid $(P, \cF)$, $f: \cD(P) \rightarrow \R_+$ DR-submodular and monotone\;
\textbf{Output:} A set $S \in \cF$\;

 Set $S = \emptyset$\; 
 \For {t = 1 \KwTo n}{
    Select $x_t = \argmax_{x \in \cN, S+x \in \cD(P)}{f(S+x) - f(S)}$ \;
    \If{$S + x_t \in \cF$}
    {
      $S = S+x_t$\;
    }
    \Return S\;
 }
 \caption{Greedy with poset matroid constraint.}\label{Algo:SMMatroid}
\end{algorithm2e}

\begin{theorem}\label{thm:monotoneDRDLApprox}
Algorithm \ref{Algo:SMMatroid} is a $\nicefrac{1}{2}$-approximation for monotone DR-submodular lattice maximization subject to a matroid constraint
and a $(1 - \nicefrac{1}{e})$-approximation for a cardinality constraint.    
\end{theorem}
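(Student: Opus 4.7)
The plan is to follow the classical Fisher--Nemhauser--Wolsey arguments for cardinality and matroid constraints, with two ingredients doing the extra work. First, the \emph{greedy inequality} $f(G_{j-1}+g_j)-f(G_{j-1})\ge f(G_{j-1}+x)-f(G_{j-1})$ for every admissible candidate $x$, which is immediate from the $\argmax$ in Algorithm \ref{Algo:SMMatroid}. Second, DR-submodularity in its distributive-lattice form $f(S+x)-f(S)\ge f(T+y)-f(T)$ whenever $S\subseteq T$, $x\preceq y$ in $P$, and $S+x,T+y\in\cD(P)$. This is what allows the set-function proofs to carry over, because an arbitrary $o\in O$ need not be a greedy candidate (its poset-predecessors might be missing from $G_{j-1}$), but any \emph{minimal} $\hat o\preceq o$ in the residual $O\setminus G_{j-1}$ automatically is, and DR lets us charge the marginal of $o$ against the marginal of $\hat o$.

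For the cardinality case I would prove the progress lemma: whenever $|G_{j-1}|<k$,
\begin{equation*}
f(G_{j-1}+g_j)-f(G_{j-1})\;\ge\;\tfrac{1}{k}\bigl(f(O)-f(G_{j-1})\bigr).
\end{equation*}
Monotonicity gives $f(O)\le f(O\cup G_{j-1})$. Fixing a linear extension $o_1,\ldots,o_m$ of the induced subposet on $O\setminus G_{j-1}$ (so $m\le k$) and writing $T_i=G_{j-1}\cup\{o_1,\ldots,o_{i-1}\}$, the difference telescopes into $\sum_i\bigl(f(T_i+o_i)-f(T_i)\bigr)$. For each $i$ I would pick a minimal element $\hat o_i$ of $O\setminus G_{j-1}$ with $\hat o_i\preceq o_i$; then $G_{j-1}+\hat o_i$ is an ideal, DR yields $f(T_i+o_i)-f(T_i)\le f(G_{j-1}+\hat o_i)-f(G_{j-1})$, and because $\hat o_i$ extends $G_{j-1}$ to an independent ideal of size $j\le k$, the greedy inequality gives $f(G_{j-1}+\hat o_i)-f(G_{j-1})\le f(G_{j-1}+g_j)-f(G_{j-1})$. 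Summing the $m\le k$ terms yields the lemma, and the standard recursion $f(O)-f(G_j)\le(1-1/k)\bigl(f(O)-f(G_{j-1})\bigr)$ iterated $k$ times gives $f(G_k)\ge(1-1/e)f(O)$.

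For the general poset matroid constraint I would, after reducing to the case where $G$ and $O$ are both bases, construct a bijection $\pi:G\to O$ with $\pi(g)=g$ on $G\cap O$ such that for every $g$ the element $\pi(g)$ was a legal candidate when greedy picked $g$ at step $t(g)$, i.e.\ $G_{t(g)-1}+\pi(g)\in\cF\cap\cD(P)$. For ordinary matroids such a $\pi$ comes from the classical Brualdi-style exchange matching; for poset matroids I would build it inductively along the greedy order using the strong symmetric exchange property of Barnabei et al.~\cite{BarnabeiSymmExchange}, choosing at step $j$ a swap partner in $O\setminus G$ whose poset-predecessors already lie in $G_{t(g_j)-1}$. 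Given $\pi$, the greedy inequality plus DR yield $f(G_{t(g)-1}+g)-f(G_{t(g)-1})\ge f(G+\pi(g))-f(G)$ for each $g$; summing and combining with a linear-extension telescope on $O\setminus G$ (reduced via DR to marginals over $G$ by the same minimal-element trick as above) gives $f(G)\ge f(O\cup G)-f(G)\ge f(O)-f(G)$, hence $f(G)\ge\tfrac{1}{2}f(O)$.

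The main obstacle is the construction of $\pi$: classical matroid exchange only guarantees independence after a single swap, but here we additionally need $G_{t(g)-1}+\pi(g)$ to be an ideal, i.e.\ all poset-predecessors of $\pi(g)$ must have been processed by greedy's implicit linear extension by step $t(g)-1$. Showing that the symmetric exchange of Barnabei et al.\ can be carried out compatibly with the particular linear extension produced by greedy, and likewise justifying the clean $\sum_{o\in O\setminus G}\bigl(f(G+o)-f(G)\bigr)$-type upper bound on $f(O\cup G)-f(G)$ despite $G+o$ not being an ideal in general, is the delicate technical step I would spend most effort on.
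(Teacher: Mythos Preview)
Your cardinality argument is essentially the paper's: telescope $f(O\cup G_{j-1})-f(G_{j-1})$ along a linear extension of $O\setminus G_{j-1}$, reduce each marginal via the DR property to the marginal of a minimal element over $G_{j-1}$, then invoke the greedy choice; summing $m\le k$ such terms gives the progress lemma, and the standard recursion finishes. This part is correct and matches.

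For the poset-matroid case you take a genuinely different route from the paper. The paper does \emph{not} construct any bijection between $G$ and $O$. Instead it exploits that greedy implicitly produces a linear extension $x_1,\dots,x_n$ of $P$: writing $r^t$ for the step at which the $t$-th greedy element is added, it sets $\sigma_t=\bigl|OPT\cap\{x_{r^t},\dots,x_{r^{t+1}-1}\}\bigr|$ and $\rho_t=f(S^t)-f(S^{t-1})$. Two observations do all the work: (i) since $S^t$ is a maximal independent ideal inside the prefix $\{x_1,\dots,x_{r^{t+1}-1}\}$, every independent ideal in that prefix (in particular $OPT$ restricted to it) has size at most $t$, which yields $\sum_{i\le t}\sigma_i\le t$; and (ii) the greedy increments $\rho_t$ are nonincreasing, proved by DR together with the greedy inequality (case-split on whether $x_{r^t}$ and $x_{r^{t+1}}$ are comparable). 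Lemma~\ref{lemma:FisherArithmetic} then gives $\sum_t\sigma_t\rho_t\le\sum_t\rho_t=f(S)$. The same minimal-element-plus-DR trick you use bounds each telescope term of $f(OPT\cup S)-f(S)$ by the $\rho_t$ of the interval its element falls into; hence the total is $\le\sum_t\sigma_t\rho_t\le f(S)$, and $f(OPT)\le 2f(S)$.

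The counting argument buys exactly what you are worried about: it never needs a one-to-one matching between $G$ and $O$, and it never evaluates $f$ on a set of the form $G+o$ for $o\in O\setminus G$. Your bijection route has two obstacles you correctly flag but do not resolve. First, the symmetric exchange of Barnabei et al.\ gives a swap $o\leftrightarrow g$ between two bases, but you additionally need all $P$-predecessors of $o$ to lie in $G_{t(g)-1}$ so that $o$ was a legal greedy candidate at step $t(g)$; nothing in the exchange statement guarantees compatibility with the particular linear extension greedy produced, and it is unclear this can be enforced inductively. Second, the inequality you want to sum, $f(G_{t(g)-1}+g)-f(G_{t(g)-1})\ge f(G+\pi(g))-f(G)$, already requires $G+\pi(g)\in\cD(P)$, which fails in general; routing through ``marginals over $G$'' runs into the same wall. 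The paper's interval-counting argument sidesteps both issues at once, so I would recommend replacing the bijection sketch with it.
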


For the proof, we will use a known lemma, which we restate here: 

\begin{lemma}[\cite{FisherGreedyII}]\label{lemma:FisherArithmetic}
 Let nonnegative numbers $\sigma_i, \rho_i$ for $i \in 1, \ldots, K$ with $\rho_{i - 1} \ge \rho_i$ for all $i$ and 
 $\sum_{i = 1}^{t}\sigma_i \le t$ for $t \in 1, \ldots, K$ be given. 
 Then $\sum_{i = 1}^{K}\sigma_i\rho_i \le \sum_{i = 1}^{K}\rho_i$
\end{lemma}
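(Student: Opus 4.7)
The plan is to prove this via a standard Abel summation (summation by parts) argument, exploiting the monotonicity of $\rho_i$ together with the partial-sum bound on the $\sigma_i$.

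First, I would introduce the partial sums $S_t = \sum_{i=1}^{t} \sigma_i$ for $t \in \{1,\ldots,K\}$, with the convention $S_0 = 0$. The hypothesis then reads $S_t \le t$ for every $t$. Writing $\sigma_i = S_i - S_{i-1}$ and applying summation by parts, one gets
\begin{equation*}
\sum_{i=1}^{K} \sigma_i \rho_i \;=\; \sum_{i=1}^{K-1} S_i \bigl(\rho_i - \rho_{i+1}\bigr) + S_K \rho_K.
\end{equation*}

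Next, because $\rho_{i-1} \ge \rho_i$, each factor $\rho_i - \rho_{i+1}$ is nonnegative; combined with $S_i \le i$ and $S_K \le K$ (and using $\rho_K \ge 0$ for the last term), this yields
\begin{equation*}
\sum_{i=1}^{K} \sigma_i \rho_i \;\le\; \sum_{i=1}^{K-1} i \bigl(\rho_i - \rho_{i+1}\bigr) + K \rho_K.
\end{equation*}

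Finally, I would telescope the right-hand side. Expanding and shifting the index in the negative part gives $\sum_{i=1}^{K-1} i \rho_i - \sum_{i=2}^{K} (i-1) \rho_i + K \rho_K$, which collapses to $\sum_{i=1}^{K} \rho_i$, completing the proof. The only small subtlety is ensuring nonnegativity of $\rho_K$ so that the boundary term $S_K \rho_K \le K \rho_K$ goes the right way; this is immediate from the nonnegativity hypothesis on all $\rho_i$. No step here is a real obstacle — the argument is entirely mechanical once the summation-by-parts identity is written down — so I would keep the proof very short, essentially displaying the three lines above.
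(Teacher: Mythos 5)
Your proof is correct: the summation-by-parts identity, the termwise bounds using $S_i \le i$ and $\rho_i - \rho_{i+1} \ge 0$ (plus $\rho_K \ge 0$ for the boundary term), and the final telescoping all check out. The paper itself gives no proof of this lemma --- it is imported verbatim from \cite{FisherGreedyII} --- and your Abel-summation argument is exactly the standard one used there, so there is nothing to reconcile.
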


\begin{proof}[Proof of Theorem \ref{thm:monotoneDRDLApprox}]
Let $K$ be the size of the set $S$ returned by our Greedy algorithm.  
For ease of notation, we introduce $S^0 \subsetneq S^1 \subsetneq \ldots \subsetneq S^K$ with $S^0 = \emptyset$ and $S^K = S$. 
That means, $S^i$ marks the progression of $S$ throughout the algorithm. Let $OPT$ denote a fixed optimal solution. Now we go through
the total ordering $x_1, \ldots, x_n$ and define $\sigma_{i - 1}$ as the number of elements in $OPT$ that we considered before reaching $S_i$, 
starting from the last element added to $S^{i - 1}$. 

More formally, let $r^{i}$ be the minimal index such that $S^{i} \subset \{x_1, \ldots x_{r^{i}}\}$, for convenience we define $r_{K + 1} = n + 1$.  
In other words, $x_1, \ldots, x_r^i$ are the elements considered by the Greedy algorithm at the point where $S$ is set to $S^i$ 
and in particular $\{x_{r^i}\} = S^i \setminus S^{i-1}$. 
Then $\sigma_{i} = |OPT\cap \{x_{r^{i}} , \ldots, x_{r^{i + 1} - 1}\}|$. 
We illustrate this definition in an example in Figure \ref{fig:DRMatroidExample}. 

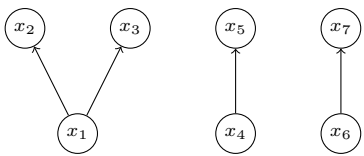
\begin{figure}[tb]
\centering
\begin{tikzpicture}[scale = 0.7]
\tikzstyle{vertex}=[circle, draw,inner sep=1pt, minimum width=15pt]
\tikzstyle{edge} = [draw]
\node[vertex] (x1) at (0,0) {\scriptsize $x_1$};
\node[vertex] (x4) at (3,0) {\scriptsize $x_4$};
\node[vertex] (x6) at (5,0) {\scriptsize $x_6$};
\node[vertex] (x2) at (-1,2) {\scriptsize $x_2$};
\node[vertex] (x3) at (1,2) {\scriptsize $x_3$};
\node[vertex] (x5) at (3,2) {\scriptsize $x_5$};
\node[vertex] (x7) at (5,2) {\scriptsize $x_7$};

\draw [->](x1) edge (x2);
\draw [->](x1) edge (x3);
\draw [->](x4) edge (x5);
\draw [->](x6) edge (x7);
  
 \end{tikzpicture}
\caption{Partial order with induced total order given by indices. Consider $OPT = \{x_4, x_5, x_6\}$ and $S = \{x_1, x_3, x_6\}$, then
$\sigma_1 = 0, \sigma_2 = 2, \sigma_3 = 1$ and $r^2 = 3$.}\label{fig:DRMatroidExample}
 
\end{figure}

We start by showing that $\sigma$ fulfills the conditions of the above Lemma, that is $\sum_{i = 1}^{t}\sigma_i \le t$ for $t \in 1, \ldots, K$. 
By construction of our Greedy algorithm, $S^t$ is a maximal independent set in $\{x_1, \ldots, x_{r^{t + 1} - 1}\}$. 
Therefore, by poset matroid properties, it also has maximum size and any independent subset of $\{x_1, \ldots, x_{r^{t + 1} - 1}\}$ 
contains at most $|S^t| = t$ elements.
As the numbers $\sigma_1, \ldots, \sigma_K$ are just one way of counting the elements of $OPT$ by grouping them into intervals on the given linear extension 
(w.l.o.g.\ $r^1 = 1$), we obtain
$$\sum_{i = 1}^t \sigma_{i} = |OPT \cap \{x_1, \ldots, x_{r^{t + 1} - 1}\}| \le t.$$ 
The last inequality holds as OPT is an independent set and thus  
$OPT \cap \{x_1, \ldots, x_{r^{t + 1}- 1}\}$ is independent as well. 

Next, we show that the differences $\rho_i:= f(S^i) - f(S^{i - 1})$ decrease as $i$ increases. 
We distinguish two cases. 
If $x_{r^{i}}$ and $x_{r^{i + 1}}$ are not comparable, we could have added $x_{r^{i + 1}}$ to $S^{i - 1}$, 
and thus, $x_r^i$ must have been the better option at that moment. 
Note that $S^{i - 1} + x_{r^{i + 1}}$ must be independent by poset matroid property $(M1)$, 
as $S^{i - 1} +x_{r^i} + x_{r^{i + 1}}$ is independent. 
Therefore,  
\begin{align*}
\rho_{i-1} &= f(S^{i - 1} + x_{r_i})  - f(S^{i - 1}) \ge  f(S^{i - 1} + x_{r^{i + 1}}) - f(S^{i - 1}) \\
	  &\ge f(S^{i - 1} +x_{r^i} + x_{r^{i + 1}}) - f(S^{i - 1} + x_{r^i}) = \rho_i,
\end{align*}
 where the second inequality follows from DR-submodularity.  
 
Otherwise,  $x_{r^{i}} \preceq x_{r^{i + 1}}$ and $\rho_{i -1} \ge \rho_{i}$ follows directly from DR-submodularity. 
Moreover, by monotonicity of $f$, $\rho_i \ge 0$ for all $i$. 
Therefore, we can apply Lemma \ref{lemma:FisherArithmetic} for $\rho_i$ and $\sigma_i$. 

Now, we consider only the elements of $OPT$ that are not chosen by our algorithm, 
ordered as in the linear extension given by $x_1, \ldots, x_n$. 
Let $\{a_1, \ldots, a_\ell\} = OPT \setminus S$, ordered the same way as the $x_i$. 
We will partition this set into the subsets contributing to the different $\sigma_t$
and to simplify notation we define these subsets as
$OPT(S, t) :=  OPT\setminus S \cap \{x_{r^{t}} , \ldots, x_{r^{t + 1} - 1}\}$. 

Now, we first use monotonicity of $f$ and then rewrite.

\begin{align*}
f(OPT) &\le f(OPT \cup S) = f(S) + \sum_{i = 1}^\ell (f(S \cup \{a_1, \ldots a_i\}) \! -\! f(S \cup \{a_1, \ldots a_{i-1}\}))\\
       &=  f(S) +\sum_{t = 1}^K \sum_{a_i \in OPT(S, t)}
		      (f(S \cup \{a_1, \ldots a_i\}) \! -\! f(S \cup \{a_1, \ldots a_{i-1}\}))\\
       &\overset{(*)}{\le} f(S) +\sum_{t = 1}^K \sum_{a_i \in OPT(S, t)} 
			(f(S^{t - 1} + x_{r^t})  - f(S^{t-1}))\\ 
       &=     f(S) +\sum_{t = 1}^K \sigma_i(f(S^{t - 1} + x_{r^t})  - f(S^{t-1}))\\ 
       &\overset{(**)}{\le} f(S) + \sum_{t = 1}^K (f(S^{t - 1} + x_{r^t})  - f(S^{t-1}) = 2f(S) - f(\emptyset) \le 2f(S)
\end{align*}
Inequality $(*)$ is a consequence of DR-submodularity and the Greedy property: 
Consider some $a_i \in OPT(S, t)$. 
If $x_{r^t} \preceq a_i$, the inequality we use follows directly from DR-submodularity
as $S^{t-1} \subset S$. 
Otherwise, $a_i$ and $x_{r^t}$ are incomparable and there is a minimal element $a \in \{x_{r^t}, \ldots, x_n\}$ with  $a \prec a_i$. 
By DR-submodularity, $f(S \cup \{a_1, \ldots, a_i\})  - f(S \cup \{a_1, \ldots a_{i-1}\}) \le f(S^{t - 1} + a) - f(S^{t - 1})$. 
But adding $x_{r(t)}$ maximized the increase in $f$ and thus, this is at most  $f(S^{t - 1} + x_{r^t}) - f(S^{t - 1})$ 
(as $a$ is not considered earlier in the algorithm). 

Inequality $(**)$ is a direct consequence of Lemma \ref{lemma:FisherArithmetic} with $\sigma$ and $\rho$ as defined above, 
and we already showed that the requirements for applying the lemma are fulfilled. 

This proves that our Algorithm \ref{Algo:SMMatroid} is a $\nicefrac{1}{2}$-approximation. 

If the poset matroid is the uniform matroid, the constraint is a cardinality constraint, i.e.\ $|S| \le k$. 
For that case,  we can use a similar argument to obtain a better bound: 
Consider some $S_i$ and define $a_1, \ldots a_\ell $ as before with $S_i$ instead of $S$, i.e.\ $\{a_1, \ldots, a_{\ell}\} = OPT \setminus S_i$. 
Clearly, $\ell \le k$. 

Analogous to the previous inequalities, we can obtain 
\begin{align*}
f(OPT) &\le f(OPT\cup S_i) = f(S_i) \!+\! \sum_{j = 1}^\ell (f(S \cup \{a_1, \ldots, a_j\}) \! -\! f(S \cup \{a_1, \ldots a_{j-1}\}))\\
&\le f(S_i) + \sum_{j = 1}^\ell (f(S_i + x_{r^i})  - f(S_i)) \le f(S_i) + k(f(S_i + x_{r^i})  - f(S_i))
 \end{align*}
As shown in \cite{KrauseSurvey}, this inequality directly implies that our Greedy algorithm 
is a $(1 - \nicefrac{1}{e})$-approximation. 
For the sake of completeness, we include the proof here as well. 

Define $\delta_i := f(OPT) - f(S_i)$ and then rewrite the above inequality:
$\delta_i \le k(\delta_i - \delta_{i + 1})$. We rearrange to obtain
$\delta_{i + 1} \le \big(1 - \frac{1}{k}\big)\delta_i$. 
 It follows inductively that $\delta_{j} \le ( 1- \frac{1}{k})^j\delta_0$ for all $j$. Moreover, 
 non-negativity of $f$ implies  $\delta_0 = f(OPT) - f (\emptyset) \le f(OPT)$. Now we use the fact that
 $1 - x \le e^{-x}$ for all $x \in \R$, resulting in
 $$\delta_j \le (1 - \frac{1}{k}\big)^j\delta_0 \le e^{-\frac{j}{k}}f(OPT).$$
 Resubstituting  $\delta_j$ then yields $f(S_j) \ge (1 - e^{-\frac{j}{k}})f(OPT)$ which implies the approximation guarantee. 
\end{proof}

 The same technique as in the proof of theorem \ref{thm:monotoneDRDLApprox} can also be used to generalize this result to a collection of $s$ poset matroid
 constraints to obtain an approximation ratio of $\frac{1}{s + 1}$.

\subsection{Unconstrained Maximization}
In this subsection, we again consider the randomized Double Greedy algorithm by Buchbinder et al.\ \cite{DoubleGreedy}. 
We will show that DR-submodularity seems to be a crucial property to obtain good approximation guarantees using that algorithm. 
In contrast to SMBIL, all proofs from \cite{DoubleGreedy} can easily be generalized to obtain a $\nicefrac{1}{2}$-approximation
for DR-SMDL. 
Note that this only results in a pseudopolynomial algorithm for the integer lattice. 
However, Soma and Yoshida showed how to adapt the Double Greedy to obtain a polynomial randomized $\nicefrac{1}{2}$-approximation
for DR-SMBIL \cite{SomaDR-SMBIL}. 
\begin{theorem}\label{thm:AlgoDR-SMDL}
 There is a randomized $\nicefrac{1}{2}$-approximation for DR-SMDL. 
\end{theorem}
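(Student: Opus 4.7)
The plan is to adapt the randomized Double Greedy of \cite{DoubleGreedy} to distributive lattices. Fix any linear extension $x_1,\dots,x_n$ of $P$ and maintain two ideals $A^i \subseteq B^i$ with $A^0=\emptyset$, $B^0=\cN$, preserving the invariant that each already-processed element $x_j$ with $j<i$ is either in both $A^{i-1}$ and $B^{i-1}$ or in neither. At step $i$: if $x_i \notin B^{i-1}$, then $x_i$ has been implicitly discarded and we put $A^i=A^{i-1}$, $B^i=B^{i-1}$. Otherwise, the invariant together with $x_i \in B^{i-1}$ forces all predecessors of $x_i$ into $A^{i-1}$, so $A^{i-1}+x_i \in \cD(P)$, and $B^{i-1}\setminus Y_i \in \cD(P)$ where $Y_i := \{z \in B^{i-1} : x_i \preceq z\}$ is the up-set of $x_i$ inside $B^{i-1}$. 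Compute $\delta_a := f(A^{i-1}+x_i)-f(A^{i-1})$ and $\delta_b := f(B^{i-1}\setminus Y_i)-f(B^{i-1})$; with probability $\max\{0,\delta_a\}/(\max\{0,\delta_a\}+\max\{0,\delta_b\})$ execute the ``add'' action $A^i=A^{i-1}+x_i$, $B^i=B^{i-1}$, and with the complementary probability execute the ``remove'' action $A^i=A^{i-1}$, $B^i=B^{i-1}\setminus Y_i$. The output is $A^n=B^n$.

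For the analysis, fix an optimal $OPT \in \cD(P)$ and define $OPT^i := (OPT \cup A^i) \cap B^i$, which is an ideal since $\cD(P)$ is closed under union and intersection. One checks $OPT^0=OPT$ and $OPT^n=A^n$. Mirroring \cite{DoubleGreedy}, the approximation guarantee is then reduced to the per-round estimate
\[
 E\bigl[f(OPT^{i-1})-f(OPT^i)\bigr] \;\le\; \tfrac12\, E\bigl[\bigl(f(A^i)-f(A^{i-1})\bigr)+\bigl(f(B^i)-f(B^{i-1})\bigr)\bigr],
\]
taken conditionally on the history through step $i-1$. Summing over $i=1,\dots,n$, telescoping, and using $A^n=B^n$ together with $f\ge 0$ gives $f(OPT) \le 2\,E[f(A^n)]$.

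The main obstacle is the per-round estimate: whereas \cite{DoubleGreedy} deletes a single element in the ``remove'' branch and derives the required inequalities from one submodularity application, here the branch deletes the entire up-set $Y_i$, and the one-element arguments must be replaced by iterated applications of DR along a reverse linear extension of $Y_i$. In particular, DR on $x_i$ immediately yields $\delta_a\ge 0$ from $x_i \in B^{i-1}$ and $A^{i-1}\subseteq B^{i-1}$; the analogue $\delta_a+\delta_b \ge 0$ follows from submodularity applied to the pair $(A^{i-1}+x_i,\, B^{i-1}\setminus Y_i)$ (whose meet is $A^{i-1}$, using $A^{i-1}\cap Y_i=\emptyset$, and whose join is the ideal $B^{i-1}\setminus(Y_i\setminus\{x_i\})$) combined with DR along $Y_i\setminus\{x_i\}$; and the decrease of $f(OPT^i)$ in the remove branch is bounded by telescoping the removal of $Y_i \cap OPT^{i-1}$ element by element from $OPT^{i-1}$ and using the extended DR inequality $f(A^{i-1}+x_i)-f(A^{i-1})\ge f(T+z)-f(T)$, valid for any intermediate ideal $T$ with $A^{i-1}\subseteq T$ and any $z \in Y_i$ with $x_i \preceq z$ (where $T+z$ is an ideal by peeling off elements of $Y_i$ from the top). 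Once these DR-based substitutes for the single-element submodularity arguments are in place, the case analysis on whether $x_i \in OPT$ and the AM--GM bound on the weights $(\delta_a,\delta_b)$ from \cite{DoubleGreedy} carry over essentially verbatim and yield the $\tfrac12$ factor.
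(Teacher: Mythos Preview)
Your algorithm and the paper's differ in a crucial way: when you ``remove'', you delete the entire up-set $Y_i$ of $x_i$ from $B$ in one shot, whereas the paper pairs the current minimal element $x_{\min}$ with a \emph{single} maximal element $x_{\max}\succeq x_{\min}$ of $B$ and deletes only that one element. This single-element pairing is exactly what makes the per-round estimate go through in the paper; with your bulk deletion the estimate is false.

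Concretely, take $P=\{x_1,x_2,x_3\}$ with $x_1\prec x_2$, $x_1\prec x_3$, and set
\[
f(\emptyset)=1,\quad f(\{x_1\})=2,\quad f(\{x_1,x_2\})=f(\{x_1,x_3\})=3,\quad f(\{x_1,x_2,x_3\})=0.
\]
This is nonnegative and DR-submodular. At step $i=1$ you have $Y_1=\cN$, $\delta_a=1$, $\delta_b=1$, and $OPT=\{x_1,x_2\}$. In the remove branch $OPT^1=\emptyset$, so $f(OPT^0)-f(OPT^1)=2$; in the add branch the change is $0$. Thus $E[f(OPT^0)-f(OPT^1)]=1$, while $E[(f(A^1)-f(A^0))+(f(B^1)-f(B^0))]=1$, so the claimed inequality $1\le \tfrac12\cdot 1$ fails. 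The gap in your argument is precisely where you telescope the removal of $Y_i\cap OPT^{i-1}$: each of the $r:=|Y_i\cap OPT^{i-1}|$ terms is bounded by $\delta_a$ via DR, but the \emph{sum} is only bounded by $r\,\delta_a$, and for $r>1$ the AM--GM step $r\,\delta_a\delta_b\le\tfrac12(\delta_a^2+\delta_b^2)$ does not hold. (Incidentally, your claim $\delta_a+\delta_b\ge 0$ is also not correct: on the same poset with $f(\emptyset)=0$, $f(\{x_1\})=1$, $f(\{x_1,x_2\})=f(\{x_1,x_3\})=2$, $f(\cN)=3$ one gets $\delta_a+\delta_b=-2$.)

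The paper avoids this by never deleting more than one element per step: at each iteration it computes $a_i=f(A+x_{\min})-f(A)$ and $b_i=f(B-x_{\max})-f(B)$ for a comparable pair $x_{\min}\preceq x_{\max}$, and the DR property applied once to this pair gives both $a_i+b_i\ge 0$ and the single-term bound on the change in $OPT_i$, exactly as in \cite{DoubleGreedy}.
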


We start by describing the modified algorithm, for a formal description see Algorithm \ref{algo:DLDoubleGreedy}. 
To avoid complicating it, read $\frac{0}{0} = 0$ in the algorithm.  
In contrast to the algorithm for SMBIL we presented in Section \ref{sec:SMBIL-Algo}, we will 
only add or delete one element at a time (which corresponds to adding or subtracting 1 for SMBIL). 
Given a distributive lattice $\cD(P)$ on a poset $P = (\cN, \preceq)$, we fix a linear extension of $P$ to obtain a total ordering
$x_1, \ldots, x_n$ of $\cN$. Starting from $A= \emptyset$ and $B = \cN$, 
we always consider a pair $(x_{min}, x_{max})$ consisting of the smallest available element in the ordering and a maximal
element $x_{max}$ such that the pair is comparable in $P$. Note that $x_{min} = x_{max}$ is permitted.   
Then, we randomly choose to add $x_{min}$ to $A$ or delete $x_{max}$ from $B$ with probabilities proportional to the increase in $f$ and 
remove the chosen element. This continues until $A = B$. 
DR-submodularity is essential here.  Without it, the example presented in Figure \ref{fig:BadLatticeOneGreedy} 
with the linear extension $c, d, a, b$ shows that Algorithm \ref{algo:DLDoubleGreedy} 
would perform arbitrarily badly even for the integer lattice. 

\begin{algorithm2e}
\DontPrintSemicolon
\textbf{Input:} A nonnegative DR-submodular function $f$ on a distributive lattice $\cD(P)$\;
\textbf{Output:} A set $A \in \cD(P)$\;

 $A= \emptyset$, $B = \cN$, $x_1, \ldots, x_n$ a linear extension of $P$, $i = 1$\; 
 $k = 1$, $j  = \max \{ j' : x_k \preceq x_{j'} \}$\; 
 \While {$A \not = B$}{
 	$a_i = f(A + x_{k}) - f(A)$\;
	$b_i = f(B - x_{j}) - f(B)$ \;
	With probability $\frac{\max\{a_i, 0\}}{\max\{a_i, 0\} + \max\{b_i, 0\}}$ set $A = A + x_{k}$, $k = k + 1$ 
	and find new $j  = \max \{ j' : x_k \preceq x_{j'} \}$. \;
	With complementary probability set $B = B - x_{j}$, find new $j$ for k.\;	
}
    \Return A\;
 \caption{Greedy with Matroid Constraint}\label{algo:DLDoubleGreedy}
\end{algorithm2e}

Adapting the arguments of \cite{DoubleGreedy} to our setting is straightforward. 
We start by showing that for at least one of the two options, $f$ does not decrease.  
\begin{lemma}\label{lemma:DLpositivechange}
In every iteration of Algorithm \ref{algo:DLDoubleGreedy}, $a_i + b_i \ge 0$. 
\end{lemma}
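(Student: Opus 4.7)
The plan is to recognize the inequality $a_i+b_i\ge 0$ as precisely the DR-submodular inequality applied to the configuration at the current iteration, with $S=A$, $T=B-x_j$, $x=x_k$, $y=x_j$. That is, I would like to show
\[
 f(A+x_k)-f(A) \;\ge\; f((B-x_j)+x_j)-f(B-x_j) \;=\; f(B)-f(B-x_j),
\]
which upon rearrangement is exactly $a_i+b_i\ge 0$. So the entire content of the lemma reduces to checking that the hypotheses of DR-submodularity are met with the stated choice of $S,T,x,y$.

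To do that, I would first verify the comparability $x_k\preceq x_j$: this is built into the algorithm, since $j$ is chosen as the largest index in the linear extension with $x_k\preceq x_{j}$, so $x_k=x_j$ or $x_k\prec x_j$. Next I would verify $A\subseteq B-x_j$. The invariant $A\subseteq B$ is maintained by the algorithm, so it suffices to argue $x_j\notin A$; this holds because $k$ advances only when an element is appended to $A$, so $A\subseteq\{x_1,\dots,x_{k-1}\}$ while $j\ge k$. Finally I need both $A+x_k$ and $(B-x_j)+x_j=B$ to lie in $\cD(P)$. The latter is obvious since $B\in\cD(P)$ throughout; the former follows because the linear extension $x_1,\dots,x_n$ is consistent with $\preceq$, so every predecessor of $x_k$ in $P$ lies in $\{x_1,\dots,x_{k-1}\}=A$, making $A+x_k$ an ideal.

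Once these four conditions are in hand, a single invocation of the DR-submodularity inequality stated in Section~\ref{sec:preliminaries} yields the displayed inequality above, which rearranges to $a_i+b_i\ge 0$.

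The only place I expect to be mildly delicate is the set-membership bookkeeping, namely maintaining throughout the algorithm that $A$ is exactly $\{x_1,\dots,x_{k-1}\}$ (so $x_j\notin A$) and that $x_j$ is a maximal element of $B$ (so $B-x_j\in\cD(P)$); the first is automatic from the way $k$ is incremented, and the second follows from the choice $j=\max\{j':x_k\preceq x_{j'}\}$ together with the fact that any element strictly above $x_j$ in $P$ would also be comparable to $x_k$ and have a larger linear-extension index, contradicting maximality of $j$. With these observations, the application of DR-submodularity is completely mechanical, and no further ideas are needed for this particular lemma.
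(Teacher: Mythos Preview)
Your proposal is correct and takes exactly the same approach as the paper: the paper's entire proof is the single sentence ``Since $A \subseteq B - x_j$ and $x_k \preceq x_j$, the above expression is just a reformulation of DR-submodularity,'' and you have simply spelled out the verification of those two hypotheses (plus the ideal conditions) in more detail than the paper bothers to.
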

Since $A \subseteq B - x_j$ and $x_k \le x_j$, the above expression is just a reformulation of DR-submodularity and thus the Lemma follows. 
We fix an optimal solution $OPT$ and let $A_i$ and $B_i$ be random variables denoting the sets $A$ and $B$ of elements after the i-th iteration. 
We show that for DR-submodular functions, we can get a good bound on the expected decrease in $f(OPT_i)$, where the random variable 
$OPT_i$ is defined as $(OPT \cup A_i) \cap  B_i$. 

\begin{lemma}
For every iteration $i$, we have $E[f(OPT_{i - 1}) - f(OPT_i)] \le \frac{1}{2} \cdot E[f(A_i) - f(A_{i - 1}) + f(B_i) - f(B_{i -1})]$. 
\end{lemma}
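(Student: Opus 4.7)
The plan is to condition on the full history of the algorithm up through iteration $i-1$, so that $A_{i-1}$, $B_{i-1}$, $x_k$, $x_j$, and hence $OPT_{i-1} := (OPT \cup A_{i-1}) \cap B_{i-1}$ are fixed and only the coin flip of iteration $i$ is random. Two structural observations are used throughout. First, since $A$ and $B$ remain ideals with $A \subseteq B$ and $x_k$ is the minimum along the linear extension among elements of $B_{i-1} \setminus A_{i-1}$, one has $A_{i-1} \subseteq OPT_{i-1} \subseteq B_{i-1}$, $x_j \notin A_{i-1}$, and all predecessors of $x_k$ lie in $A_{i-1}$. Second, because $OPT$ is an ideal with $x_k \preceq x_j$, the configuration $x_k \notin OPT$, $x_j \in OPT$ is impossible, leaving only three cases: (I) $x_k, x_j \in OPT$; (II) $x_k \in OPT, x_j \notin OPT$; (III) neither in $OPT$. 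Computing $OPT_i = (OPT \cup A_i) \cap B_i$ in each case shows that in Case I, adding $x_k$ leaves $OPT_{i-1}$ unchanged while removing $x_j$ yields $OPT_{i-1} \setminus \{x_j\}$; in Case III the situation is dual; and in Case II both random outcomes leave $OPT$ unchanged.

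The core step is to bound the deterministic loss in the two non-trivial cases by the gains $a_i$ and $b_i$ using the distributive-lattice DR property. In Case I, put $\ell_b := f(OPT_{i-1}) - f(OPT_{i-1} - x_j)$ and apply DR to the nested pair $A_{i-1} \subseteq OPT_{i-1} - x_j$ with the element pair $x_k \preceq x_j$; this gives $a_i = f(A_{i-1} + x_k) - f(A_{i-1}) \ge f(OPT_{i-1}) - f(OPT_{i-1} - x_j) = \ell_b$. In Case III, put $\ell_a := f(OPT_{i-1}) - f(OPT_{i-1} + x_k)$ and apply DR to $OPT_{i-1} \subseteq B_{i-1} - x_j$ with the same pair $x_k \preceq x_j$ to obtain $b_i \ge \ell_a$. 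All ideal side conditions ``$S + x, T + y \in \cD(P)$'' required by the DR definition follow from $A_{i-1}$ containing every predecessor of $x_k$.

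With the bounds $\ell_b \le a_i$ and $\ell_a \le b_i$ in hand, the inequality to be proved reduces to a check on the sign pattern of $(a_i, b_i)$. Writing $p = \max\{a_i,0\}/(\max\{a_i,0\}+\max\{b_i,0\})$ and $q = 1-p$, the representative subcase $a_i, b_i \ge 0$ of Case I demands $q\ell_b \le \tfrac12(pa_i + qb_i)$, which after clearing denominators and applying $\ell_b \le a_i$ becomes $2a_ib_i \le a_i^2 + b_i^2$, i.e.\ AM-GM. The mixed-sign subcases reduce either to $0 \le 0$ or to $\ell_b \le a_i \le 0 \le b_i/2$, using Lemma \ref{lemma:DLpositivechange} to rule out $a_i, b_i$ both negative. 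Case III is entirely symmetric, Case II is trivial, and the claim follows by the tower rule.

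The main obstacle I anticipate is keeping the two elements $x_k$ and $x_j$ straight: unlike in \cite{DoubleGreedy}, where a single element simultaneously plays the role of ``add candidate'' and ``delete candidate'', here these roles can be assumed by distinct poset elements. The distributive-lattice DR definition of Section \ref{sec:preliminaries} is precisely strong enough to cover this asymmetry through its $x \preceq y$ clause, but it has to be invoked carefully on nested ideals, and the invariant that every predecessor of $x_k$ already lies in $A_{i-1}$ is what guarantees the ideal side conditions on $S+x$ and $T+y$ are met each time.
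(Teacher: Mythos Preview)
Your proof is correct and follows essentially the same approach as the paper: condition on the history, use the ideal property of $OPT$ together with $x_k \preceq x_j$ to reduce to a small case analysis, bound the relevant loss term by $a_i$ (resp.\ $b_i$) via the DR inequality applied to the nested ideals $A_{i-1} \subseteq OPT_{i-1} - x_j$ (resp.\ $OPT_{i-1} \subseteq B_{i-1} - x_j$), and finish with $2a_ib_i \le a_i^2 + b_i^2$. The only cosmetic difference is that you branch first on the membership pattern of $(x_k,x_j)$ in $OPT$ and then on the signs of $(a_i,b_i)$, whereas the paper does it the other way around; the DR applications and the resulting inequalities are identical.
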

Then, as presented in Section \ref{sec:SMBIL-Algo}, by summing over all iterations and collapsing the resulting telescopic sum, 
we obtain the proof of Theorem \ref{thm:AlgoDR-SMDL}. 
It remains to show the lemma. 
\begin{proof}
The proof works completely analogous to \cite{DoubleGreedy}. We will give the details here and omit them for similar results. 
It suffices to prove the inequality conditioned on $A_{i - 1} = S_{i - 1}$ for a set $S_i$ consisting of elements
$x_{min}$ and a corresponding series of pairs $(x_{min}, x_{max})$ when the probability that this
series of pairs leads to $A_i = S_i$ is non-zero. We now fix such an event.
Then, $B_{i-1}$, $OPT_{i - 1}$, $a_i, b_i$ and $x_{min},x_{max}$ become constants. 
By Lemma \ref{lemma:DLpositivechange}, $a_i$ or $b_i$ must be nonnegative. 
Therefore, it suffices to consider the following cases. 

Suppose $b_i \le 0$ and thus $a_i \ge 0$. Then, $A_i = A_{i - 1} + x_{min}$ and $OPT_i = (OPT \cup (A_{i -1} + x_{min}))\cap B_{i - 1}$ and 
$x_{min}, x_{max} \in B_i$.
If $x_{min} \in OPT$, we have $0 = f(OPT_{i - 1}) - f(OPT_i) \le \frac{a_i}{2}$.   
If $x_{min} \not \in OPT$, then 
\begin{align*}
 f(OPT_{i - 1}) - f(OPT_i) &= f(OPT_{i - 1}) - f(OPT_{i - 1} + x_{min}) \\ 
 &\le f(B_{i -1} - x_{max}) - f(B_{i - 1}) = b_i 
\le 0  \le \frac{a_i}{2}
\end{align*}

by DR-submodularity. 
The case where $a_i \le 0$ is analogous.

Therefore, suppose $a_i > 0$ and $b_i > 0$. 
Then, 
\begin{align*}
&E[f(A_i) - f(A_{i - 1}) + f(B_i) - f(B_{i -1})] \\
= &\frac{a_i}{a_i + b_i}(f(A_{i - 1} \!+\! x_{min})\!-\!f(A_{i - 1})) + 
						   \frac{b_i}{a_i + b_i}(f(B_{i - 1} \!-\! x_{max})\!-\!  f(B_{i - 1})) \\
= &\frac{a_i^2 + b_i^2}{a_i + b_i}			\numberthis \label{eq:DLexpectedSets}
\end{align*}						  

On the other hand, 
\begin{align*}
&E[f(OPT_{i - 1}) - f(OPT_i)] \\
= &\frac{a_i}{a_i + b_i}(f(OPT_{i - 1}) \!-\!f(OPT_{i - 1} + x_{min})) \\ 
						   &+ \frac{b_i}{a_i + b_i}(f(OPT_{i - 1} \!-\!  f(OPT_{i - 1} - x_{max}))\\
\le& \frac{a_ib_i}{a_i + b_i} 				\numberthis \label{eq:DLexpectedOPT}
\end{align*}

We still need to explain the final inequality. 
If $x_{max} \not \in OPT$, the second term on the left hand side equals zero. For the first term , we use DR-submodularity to obtain
$f(OPT_{i - 1}) \!-\!f(OPT_{i - 1} + x_{min}) \le f(B_{i-1} - x_{max}) - f(B_{i - 1}) = b_i$. 

Otherwise, $x_{min} \in OPT$ as OPT is an ideal and thus, the first term is zero. Moreover, 
DR-submodularity implies   
$f(OPT_{i - 1}) \!-\!  f(OPT_{i - 1} - x_{max}) \le f(A_{i - 1} + x_{min}) - f(A_{i - 1})  = a_i$.

Now the inequalities \eqref{eq:DLexpectedOPT} and \eqref{eq:DLexpectedSets} together with the fact that 
$\frac{a_i b_i}{a_i + b_i} \le \frac{1}{2}\cdot \frac{a_i^2 + b_i^2}{a_i + b_i}$ yield the desired result. 
\end{proof}

The deterministic version from \cite{DoubleGreedy} which yields a $\frac{1}{3}$-approximation can also be generalized to this case. 
In contrast to the non-DR-setting in Section \ref{sec:SMBIL-Algo}, the proofs from \cite{DoubleGreedy} by 
Buchbinder et al.\ can be directly generalized together with the ideas used in this section. Therefore, we will not present them here. 

As shown in \cite{DerandomizedDoubleGreedy}, there is also a deterministic $\nicefrac{1}{2}$-approximation based on the randomized algorithm. 
The rough idea is the following. They maintain a set of possible ``states'' of sets with corresponding probability. For example, 
in the beginning, there is exactly one possible state: $A = \emptyset$ and $B= \cN$.
In iteration $i$, they consider element $u_i$ and for each current state $(A, B)$, they assign probabilities to adding $u_i$ to $A$ or deleting
it from $B$. This way, they build the set of states for iteration $i$. For example, after one iteration, 
the set of states would then be $(u_1, \cN)$ with probability $p$ and $(\emptyset, \cN \setminus u_1)$ with probability $1-p$. 
The groundbreaking idea in this paper is how to assign these probabilities. There are two aspects to this. 
First, the set of states should stay small, in particular, doubling its size in every iteration is not an option. Second, the expected decrease in 
the objects $OPT(A, B)$ corresponding to states $(A, B)$ should be small. 
Both these properties are achieved by determining the probabilities using a linear program. The first property follows from the fact 
that an extreme point of their linear program must have small support. The second property is enforced by inequalities in the LP. 

Their strategy can also be generalized to the DR-submodular setting on distributive lattices. But, there is one additional technical issue. 
In \cite{DerandomizedDoubleGreedy}, the element that is to be added or deleted is the same for all states in an iteration $i$. 
For our setting, each state $(A, B)$ also comes with its own pair $x_{min} \preceq x_{max}$. 
However, the objects $OPT(A,B)$ used for the analysis also depend on the state and for a fixed state we still obtain the necessary properties 
like $OPT(A, B) \subset B\setminus x_{max}$.
Therefore, we can use DR-submodularity in the same way as \cite{DerandomizedDoubleGreedy} 
to obtain a deterministic $\nicefrac{1}{2}$-approximation for SMDL. Since the technical details are very similar to the ones already provided we will
not present them. 

It seems natural to ask whether the approach presented in Section \ref{sec:SMBIL-Algo} for SMBIL, i.e.\ for non-DR functions, 
could also be generalized to distributive lattices. 
This would mean choosing a best join-irreducible element of the lattice to add to $A$ and a best complement of a meet-irreducible element to delete from $B$. 
Which elements would be eligible to be chosen would be subject to discussion. 
But, using such a strategy, it is possible to end up in situations where there is no possibility of changing 
$A$ or $B$ as described above without decreasing their values w.r.t.\ $f$. 

\section{DR-submodularity on Distributive Lattices: Hardness}\label{sec:SM-DR-DL-hardness}
As the previous algorithms maximizing DR-submodular functions on the distributive lattice were direct generalizations of algorithms for the Boolean lattice, 
one could assume that there is no significant structural difference. However, in this section we
consider the problem of maximizing a monotone DR-submodular function $f: \cD(P) \rightarrow \R_+$ on the distributive 
lattice subject to a knapsack constraint with binary knapsack weights. Formally, 
\begin{align*}
 \max\{f(S): w(S) \le B\} \text{ for } w: \cN \rightarrow \{0, 1\} \numberthis \label{eq:KnapsackDL}
\end{align*}

This problem is essentially as hard to approximate as the Densest $k$-subhypergraph problem and thus, in particular there is no constant factor approximation
under some complexity assumptions. 
In contrast, for monotone functions on the Boolean lattice, there is a $(1- \nicefrac{1}{e})$-approximation for one (arbitrary) knapsack constraint \cite{SviridenkoKnapsack} 
and a  $(1- \nicefrac{1}{e}- \epsilon)$-approximation(\cite{KulikMonotoneKnapsack}) for $\mathcal{O}(1)$ Knapsack constraints. 
Moreover, \cite{SomaYoshi16} generalized these results to the integer lattice and obtained a guarantee of $(1- 1/e- \epsilon)$ for monotone 
DR-submodular functions and positive knapsack coefficients and a 
pseudopolynomial $(1 - \nicefrac{1}{e})$-approximation for monotone submodular functions 
\cite{Inaba:KnapsackIL}. Note that due to  monotonicity of $f$, the result directly generalizes to nonnegative knapsack values. 

We start by introducing the densest $k$-subhypergraph problem (D$k$SH). Given a hypergraph $H = (V, E)$ and $k \in \N$, 
the goal is to choose $k$ vertices such that the number of induced hyperedges is maximized. 
One special case is the densest $k$-subgraph problem for which, although it is believed to be hard, 
it is only known that no PTAS exists if NP has no subexponential algorithms \cite{Khot04DkS}. 
However, Hajiaghayi at al.\ showed in \cite{HajiSubhypergraph} that D$k$SH is hard to approximate within a factor of $2^{(\log n)^\delta}$ 
for some $\delta$ under the assumption that 
$3-SAT \not \in DTIME(2^{n^{3/4 + \epsilon}})$, i.e.\ $3-SAT$ cannot be solved in time $2^{n^{3/4 + \epsilon}}$. 

\begin{theorem}
 \label{thm:SubmaxKnapsackHardness}
 If there is an $\alpha(n)$-approximation for maximizing a monotone DR-submodular function on a distributive lattice subject to a $0$-$1$-Knapsack constraint, 
 then there is a $2\alpha(n'(m' + 1))$-approximation for D$k$SH, where $n$ and $(n', m')$ denote the respective input sizes and 
 $\alpha$ is positive and non-decreasing. 
\end{theorem}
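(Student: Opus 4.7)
The plan is a reduction from the densest $k$-subhypergraph problem D$k$SH to \eqref{eq:KnapsackDL}. Given an instance $(H, k)$ with $H = (V, E)$, $|V| = n'$, $|E| = m'$, I would construct a poset $P$ whose ground set consists of $V$ together with, for every hyperedge $e \in E$, a disjoint chain $e^0 \prec e^1 \prec \cdots \prec e^{n'-1}$ of $n'$ fresh elements, plus the extra cover relations $v \prec e^0$ for each $v \in e$. By Birkhoff's theorem, $\cD(P)$ is a distributive lattice, and its ground set has exactly $n' + m' n' = n'(m'+1)$ elements, matching the argument of $\alpha$ in the statement. An ideal $S$ of $P$ is determined by a vertex set $V(S) := V \cap S$ together with, for each $e$, a prefix of the $e$-chain; any nonempty prefix forces all of $e$'s vertices to lie in $V(S)$.

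Next I would take $f(S) := |S|$ with $0$-$1$ knapsack weights $w(v)=1$ for $v \in V$, $w(e^j)=0$ otherwise, and budget $B=k$. The function $f$ is monotone, and DR-submodular because every marginal of $|S|$ equals $1$ whenever the added element is new (which is forced by $S+x, T+y \in \cD(P)$), so the DR inequality reduces to $1 \ge 1$. A feasible ideal picks at most $k$ vertices, and by monotonicity we may assume it also contains the full chain above every hyperedge induced by its vertices (chain elements cost $0$). A direct count then gives
\[
 OPT_{SM} \;=\; k \,+\, n' \cdot OPT_D,
\]
where $OPT_D$ abbreviates the D$k$SH optimum, attained by a densest $k$-subhypergraph.

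Given an $\alpha$-approximate ideal $S^*$, I would set $V^* := V \cap S^*$, complete $S^*$ with all available chains, and let $j$ denote the number of hyperedges induced by $V^*$, so that $f(S^*) = |V^*| + n' j \le k + n' j$. Combining this with $f(S^*) \ge OPT_{SM}/\alpha$ and the (WLOG) assumption $k \le n'$ yields
\[
 j \;\ge\; \frac{OPT_D}{\alpha} \;-\; \frac{\alpha-1}{\alpha}.
\]
Returning the better of $V^*$ and the trivial candidate ``a single hyperedge of size at most $k$'', which gives at least one induced edge whenever $OPT_D \ge 1$, a short case split on whether $OPT_D \ge 2\alpha$ or $OPT_D < 2\alpha$ delivers $j \ge OPT_D/(2\alpha)$, i.e.\ the claimed $2\alpha$-approximation.

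The main conceptual obstacle is arranging DR-submodularity at all: the naive choice $f(S) = |S \cap E|$ fails DR, since adding a vertex yields marginal $0$ while adding a hyperedge above it yields marginal $1$, violating the inequality for $x \prec y$. This is precisely what forces the construction to inflate each hyperedge into an $n'$-long chain of free elements and to take $f(S) = |S|$; the chain length $n'$ is exactly what shrinks the additive ``$+k$'' slack in $OPT_{SM}$ to at most one hyperedge's worth of value, which is what the factor of $2$ absorbs.
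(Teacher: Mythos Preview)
Your proposal is correct and follows essentially the same reduction as the paper: both use $f(S)=|S|$ with $0$--$1$ knapsack weights on the vertex copies and inflate each hyperedge into several zero-weight elements so that the additive $+k$ slack is absorbed by the factor~$2$. The only cosmetic difference is that the paper creates $k$ pairwise incomparable copies $e^1,\ldots,e^k$ of each hyperedge (giving $n=n'+km'\le n'(m'+1)$), whereas you use an $n'$-long chain (giving $n=n'(m'+1)$ exactly); the resulting arithmetic is the same.
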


\begin{proof}
Let $H = (V, E)$, $k \in \N$ be the input of D$k$SH.  
Our goal is the construction of a submodular maximization instance. 
We construct a poset $P = (\cN, \preceq)$ by introducing an element $v$ for every vertex $v \in V$ and 
$k$ elements $e^1, \ldots, e^k$ for every hyperedge $e \in E$. 
Slightly abusing notation, we are also going to denote by $V$ the subset of elements in $\cN$ corresponding to vertices in $H$.  
Then, $v \le e^i \ \forall i \le k, v \in V$ if $v \in e$ in $H$. 
Moreover, $f(S) = |S|$ which clearly is DR-submodular and monotone and $n = n' + km' \le n'(1+m')$. 
The knapsack constraint is given by asking that $|S\cap V| \le k$, i.e.\ $B = k$ and $w(v) = 1$ for $v \in V$ and $w(e) = 0$ otherwise. 

Any ideal $I \in \cD(P)$ corresponds to the set of vertices $I \cap V$ and some hyperedges induced by this set.
On the other hand, any induced subhypergraph in $H$ can be represented (not uniquely) by an ideal. 
In particular, there is a one-to-one correspondence between induced subhypergraphs and ideals $I$ where $I \setminus V$ is maximal. 

Let $S$ be the result of an $\alpha(n)$-approximation for our problem. Since the knapsack constraint only 
affects the elements in $V$, we assume w.l.o.g.\ that $S$ contains either all copies $e^i$ for an edge $e$ or none of them and that $|S\cap V| = k$. 
Therefore, $S\cap V$ corresponds to a feasible solution for D$k$SH inducing $\frac{|S| - k}{k} =: \beta$ edges.
We can assume that $\beta \ge 1$. If not, we construct a solution fulfilling that easily.  

Let $\mathcal{R}$ denote the optimal value for D$k$SH. Then, the optimal value for our submodular maximization instance is 
$k(1 + \mathcal{R})$. 
Since $S$ was obtained by an $\alpha(n)$-approximation,  
$f(S) = |S| \ge \frac{1}{\alpha(n)}k(1 + \mathcal{R})$. 
Combining the above, we obtain 
$$ k(1 + \mathcal{R}) \le \alpha(n)|S| = \alpha(n)(\beta + 1)k \le \alpha(n)2\beta k.$$
Therefore,
$ 2\alpha(n'(m' + 1))\beta \ge 2\alpha(n)\beta \ge \mathcal{R}$ and thus, we have a $2\alpha(n'(m' + 1))$ approximation for D$k$SH.  
\end{proof}

We could refine the constant of 2 in the previous analysis: 
For any constant $c$ we can find a solution of D$k$SH 
of value at least $c$ by enumeration. (Unless the optimum is less than $c$, then we solve D$k$S optimally.)
If our algorithm performs worse, we take the solution of size $c$ instead and therefore can assume 
$\beta \ge c$. This leads to a $(1 + \frac{1}{c})\alpha(n'(m' + 1))$-approximation in the above analysis.

Using the above theorem we obtain: 

\begin{corollary}
The submodular maximization problem defined in \eqref{eq:KnapsackDL} is hard to approximate within a factor  
of $2^{(\log (n^{1/2} - 1))^\delta - 1}$ for every constant $\delta > 0$ under the assumption that 
$3-SAT \not \in DTIME(2^{n^{3/4 + \epsilon}})$ for every constant $\epsilon > 0$. 
\end{corollary}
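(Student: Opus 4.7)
The plan is to chain Theorem~\ref{thm:SubmaxKnapsackHardness} with the cited Hajiaghayi et al.\ hardness for D$k$SH and to proceed by contraposition. Fix $\delta > 0$ and suppose, toward contradiction, that problem~\eqref{eq:KnapsackDL} admits an $\alpha$-approximation with $\alpha(n) \le 2^{(\log(n^{1/2}-1))^\delta - 1}$. Since $\alpha$ is positive and non-decreasing, Theorem~\ref{thm:SubmaxKnapsackHardness} immediately produces a $2\alpha(n'(m'+1))$-approximation for D$k$SH on hypergraph instances with $n'$ vertices and $m'$ hyperedges.

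The leading factor of $2$ in Theorem~\ref{thm:SubmaxKnapsackHardness} is calibrated precisely to absorb the $-1$ in the exponent of $\alpha$, which gives
$$2\alpha(n'(m'+1)) \le 2 \cdot 2^{(\log((n'(m'+1))^{1/2} - 1))^\delta - 1} = 2^{(\log((n'(m'+1))^{1/2} - 1))^\delta}.$$
To collide with the Hajiaghayi et al.\ lower bound of $2^{(\log n')^\delta}$, I would like to show that $(n'(m'+1))^{1/2} - 1 \le n'$, which is equivalent to $m' \le n' + 1 + 1/n'$, i.e.\ essentially $m' \le n'$.

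The main obstacle is this size comparison, since a D$k$SH instance in general may have many more hyperedges than vertices. I would handle it by padding the input hypergraph with $\max(0, m' - n')$ isolated vertices: such vertices contribute no induced hyperedges, so neither the optimum of D$k$SH nor the family of feasible $k$-subhypergraphs is affected, and the Hajiaghayi et al.\ inapproximability transfers to the padded family in which $n' \ge m'$. On such padded instances the chain of inequalities above yields a $2^{(\log n')^\delta}$-approximation for D$k$SH, contradicting the hypothesis that $3-SAT \notin DTIME(2^{n^{3/4+\epsilon}})$ for every $\epsilon > 0$. The residual checks --- polynomial running time of the reduction in Theorem~\ref{thm:SubmaxKnapsackHardness}, preservation of the exponent $\delta$ across the chain, and the polynomial blow-up introduced by the padding --- are routine bookkeeping.
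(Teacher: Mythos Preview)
Your contrapositive setup and use of Theorem~\ref{thm:SubmaxKnapsackHardness} are correct, but the padding step has a genuine gap. After adding $\max(0, m'-n')$ isolated vertices, the padded instance has $N' = \max(n', m')$ vertices, and your chain of inequalities gives a $2^{(\log N')^\delta}$-approximation on it (and hence, since the optimum is unchanged, on the original instance). But the Hajiaghayi et al.\ gap on the \emph{original} hard instance is $2^{(\log n')^\delta}$, where $n'$ is the \emph{original} vertex count; since $N' \ge n'$, your achieved ratio $2^{(\log N')^\delta}$ may exceed that gap, and no contradiction follows. The sentence ``the Hajiaghayi et al.\ inapproximability transfers to the padded family'' is therefore not justified as written: padding enlarges the vertex count, which weakens the inapproximability threshold when it is expressed in terms of that count. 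The hole can be patched by invoking the D$k$SH hardness for some $\delta' > \delta$ together with the fact that $m'$ is polynomially bounded in $n'$ in the hard instances (so $\log N' = O(\log n')$), but you did not carry this out, and it is not mere bookkeeping.

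The paper takes a different and cleaner route that avoids padding entirely. It looks inside the Hajiaghayi et al.\ reduction --- which proceeds via the Maximum Balanced Complete Bipartite Subgraph problem of Feige and Kogan, where the input bipartite graph is \emph{balanced} --- and observes that the hard D$k$SH instances produced already satisfy $n' = m'$. With that in hand, $n \le n'(m'+1) = n'(n'+1) \le (n'+1)^2$, hence $n^{1/2} - 1 \le n'$, and the same $\delta$ works on both sides with no slack to manage. The paper also notes that Feige--Kogan state their inapproximability for \emph{every} constant $\delta$, not merely for some $\delta$ as quoted in Hajiaghayi et al., which is what supports the ``for every $\delta$'' in the corollary.
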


\begin{proof}
We need to examine the hardness results for D$k$HS from \cite{HajiSubhypergraph} a little closer. The authors reduce from a problem presented in \cite{FeigeKogan04}, the so-called 
Maximum Balanced Complete Bipartite Subgraph Problem. However, they omit the fact that the input graph for that problem is not just bipartite but balanced as well. 
This ensures that the complexity result for D$k$SH is true even if the number of hyperedges is equal to the number of nodes. 
Moreover, \cite{FeigeKogan04} state their result for all constants $\delta$, not just for some constant as cited and used in \cite{HajiSubhypergraph}, 
which is why we also use this more general formulation here.  

Therefore, we can now apply Theorem \ref{thm:SubmaxKnapsackHardness} for a hypergraph with $n' = m'$  and use the same reduction as above. 
Therefore, $n \le n'^2 + n' \le (n' + 1)^2$ and thus
$n' \ge \sqrt(n) - 1$. Therefore, by setting $\alpha(n) = 2^{\log(n^{1/2} - 1)^\delta - 1}$, we obtain our result. 
\end{proof}


\bibliographystyle{plain}
\bibliography{SubMaxFullLiterature.bib}

\begin{thebibliography}{10}

\bibitem{FacLoc1}
A.A. Ageev and M.I. Sviridenko.
\newblock An 0.828 approximation algorithm for the uncapacitated facility
  location problem.
\newblock {\em Discrete Applied Mathematics}, 93:149--156, 1999.

\bibitem{BarnabeiSymmExchange}
M.~Barnabei, G.~Nicoletti, and L.~Pezzoli.
\newblock The symmetric exchange property for poset matroids.
\newblock {\em Advances in Mathematics}, 102:230--239, 1993.

\bibitem{Birkhoff}
G.~Birkhoff.
\newblock {\em Lattice Theory}, volume~25.
\newblock American Mathematical Society Colloquium Publications, Providence,
  Rhode Island, 3 edition, 1995.

\bibitem{SubmodSupplyChains}
M.~Bolandnazar, W.T. Huh, S.T. McCormick, and K.~Murota.
\newblock A note on ``order-based cost optimization in assemble-to-order
  systems''.
\newblock Technical report, University of Tokyo, February 2015.

\bibitem{DerandomizedDoubleGreedy}
N.~Buchbinder and M.~Feldman.
\newblock Deterministic algorithms for submodular maximization problems.
\newblock In {\em SODA}, pages 392--403, 2016.

\bibitem{DoubleGreedy}
N.~Buchbinder, M.~Feldmann, J.~Naor, and R.~Schwartz.
\newblock {A Tight Linear (1/2)-Approximation for Unconstrained Submodular
  Maximization}.
\newblock {\em SIAM Journal of Computing}, 44(5):1384--1402, 2015.

\bibitem{MonotoneMatroid1}
G.~Calinescu, C.~Chekuri, M.~Pal, and J.~Vondr\'ak.
\newblock Maximizing a submodular set function subject subject to a matroid
  constraint.
\newblock In {\em IPCO}, pages 182--196, 2007.

\bibitem{ContentionResolution}
C.~Chekuri, J.~Vondr\'ak, and R.~Zenklusen.
\newblock Submodular function maximization via the multilinear relaxation and
  contention resolution schemes.
\newblock {\em SIAM Journal of Computing}, 43:1831--1879, 2014.

\bibitem{GameTheory1}
S.~Dugmi, T.~Roughgarden, and M.~Sundararajan.
\newblock Revenue submodularity.
\newblock {\em Theory of Computing}, 8(1):95 -- 119, 2012.

\bibitem{DIW1972}
F.~D.~J. Dunstan, A.~W. Ingleton, and D.~J.~A. Welsh.
\newblock Supermatroids.
\newblock In D.~J.~A. Welsh and D.~R. Woodall, editors, {\em Combinatorics
  (Proceedings of the Conference on Combinatorial Mathematics)}, pages 72--122.
  The Institute of Mathematics and its Applications, 1972.

\bibitem{Faigle1984}
U.~Faigle.
\newblock Matroids on ordered sets and the greedy algorithm.
\newblock {\em Annals Discrete Mathematics}, 19:115--128, 1984.

\bibitem{FeigeKogan04}
U.~Feige and S.~Kogan.
\newblock Hardness of approximation of the balanced complete bipartite subgraph
  problem.
\newblock Technical Report MCS04-04, The Weizmann Institute of Science, 2004.

\bibitem{LocalSearch}
U.~Feige, V.S. Mirrokni, and J.~Vondr\'ak.
\newblock {Maximizing Non-Monotone Submodular Functions}.
\newblock {\em SIAM Journal on Computing}, 40(4):1133--1153, 2011.

\bibitem{USMImprovement2}
M.~Feldmann, J.~Naor, and R.~Schwartz.
\newblock Nonmonotone submodular maximization via a structural continuous
  greedy algorithm.
\newblock In {\em ICALP}, pages 342 -- 353, 2011.

\bibitem{FisherGreedyII}
M.~L. Fisher, G.~L. Nemhauser, and L.~A. Wolsey.
\newblock An analysis of approximations for maximizing submodular set functions
  ii.
\newblock {\em Mathematical Programming}, 8:73--87, 1978.

\bibitem{FujiDistrLattice}
S.~Fujishige.
\newblock {\em Submodular Functions and Optimization}, volume~58 of {\em Annals
  of Discrete Mathematics}, pages 305--308.
\newblock Elsevier, 2 edition, 2005.

\bibitem{MaxCut1}
M.X. Goemans and D.P. Williamson.
\newblock Improved approximation algorithms for maximum cut and satisfiability
  problems using semidefinite programming.
\newblock {\em Journal of the ACM}, 42(6):1115--1145, 1995.

\bibitem{KrauseSurvey}
D.~Golovin and A.~Krause.
\newblock Submodular function maximization.
\newblock In L.~Bordeaux, Y.~Hamadi, and P.~Kohli, editors, {\em Tractability:
  Practical Approaches to Hard Problems}. Cambridge University Press, 2014.

\bibitem{mySubmax}
C.~Gottschalk and B.~Peis.
\newblock Submodular function maximization on the bounded integer lattice.
\newblock In {\em Proceedings of WAOA}, pages 133--144, 2015.

\bibitem{SubmodularMin1}
M.~Gr\"otschel, L.~Lov\'asz, and A.~Schrijver.
\newblock {\em {Geometric Algorithms and Combinatorial Optimization}}.
\newblock Springer, Berlin, 1988.

\bibitem{HajiSubhypergraph}
M.~T. Hajiaghayi, K.~Jain, K.~Konwar, L.~C. Lau, I.~I. M\u{a}ndoiu, A.~Russell,
  A.~Shvartsman, and V.~V. Vazirani.
\newblock The minimum k-colored subgraph problem in haplotyping and dna primer
  selection.
\newblock In {\em Proc. Int. Workshop on Bioinformatics Research and
  Applications (IWBRA)}, pages 758--766, 2006.

\bibitem{SocialNetworks}
J.~Hartline, V.S. Mirrokni, and M.~Sundararajan.
\newblock Optimal marketing strategies over social networks.
\newblock {\em WWW}, pages 189--198, 2008.

\bibitem{BorodinMyopicBounds}
N.~Huang and A.~Borodin.
\newblock Bounds on double-sided myopic algorithms for unconstrained
  non-monotonesubmodular maximization.
\newblock In {\em Proceedings of ISAAC}, pages 528--539, 2014.

\bibitem{Inaba:KnapsackIL}
K.~Inaba, N.~Kakimura, Kawarabayashi K., and T.~Soma.
\newblock Optimal budget allocation: Theoretical guarantee and efficient
  algorithm.
\newblock In {\em Proc. of the 31st International Conference on Machine
  Learning}, pages 351--359. JMLR.org, 2014.

\bibitem{Khot04DkS}
S.~Khot.
\newblock Ruling out ptas for graph min-bisection, densest subgraph and
  bipartite clique.
\newblock In {\em FOCS}, pages 136--145, 2004.

\bibitem{KulikMonotoneKnapsack}
A.~Kulik, H.~Shachnai, and T.~Tamir.
\newblock Approximations for monotone and nonmonotone submodular maximization
  with knapsack constraints.
\newblock {\em Math. Oper. Res.}, 38(4):729--739, 2013.

\bibitem{Murota}
K.~Murota.
\newblock Submodular function minimization and maximization in discrete convex
  analysis.
\newblock {\em RIMS Kokyuroku Bessatsu}, B23:193--211, 2010.

\bibitem{NemhauserHardnessSM}
G.~L. Nemhauser and L.~A. Wolsey.
\newblock Best algorithms for approximating the maximum of a submodular set
  function.
\newblock {\em Mathematics of Operations Research}, 3(3):177--188, 1978.

\bibitem{FisherGreedyI}
G.~L. Nemhauser, L.~A. Wolsey, and M.~L. Fisher.
\newblock An analysis of approximations for maximizing submodular set functions
  i.
\newblock {\em Mathematical Programming}, 14:265---294, 1978.

\bibitem{USMImprovement1}
S.~Oveis~Gharan and J.~Vondr\'ak.
\newblock Submodular maximization by simulated annealing.
\newblock In {\em SODA}, pages 1098 -- 1117, 2011.

\bibitem{GameTheory2}
A.~Schulz and N.A. Uhan.
\newblock Approximating the least core value and least core of cooperative
  games with supermodular costs.
\newblock {\em Discrete Optimization}, 10(2):163--180, 2013.

\bibitem{SomaBonnTalk}
T.~Soma.
\newblock Maximizing monotone submodular functions over the integer lattice.
\newblock \url{https://www.youtube.com/watch?v=9MPbhCdOHXk}, 2015.
\newblock accessed 26.09.2016.

\bibitem{SomaYoshiSubmodularCover}
T.~Soma and Y.~Yoshida.
\newblock A generalization of submodular cover via the diminishing returns
  property on the integer lattice.
\newblock In {\em Advances in Neural Information Processing Systems (NIPS)},
  pages 847--855, 2015.

\bibitem{SomaYoshi16}
T.~Soma and Y.~Yoshida.
\newblock Maximizing monotone submodular functions over the integer lattice.
\newblock In {\em IPCO 2016}, pages 325--336, 2016.

\bibitem{SomaDR-SMBIL}
T.~Soma and Y.~Yoshida.
\newblock Non-monotone dr-submodular function maximization.
\newblock In {\em Proceedings of the 31st AAAI Conference on Artificial
  Intelligence (AAAI)}, 2017.
\newblock to appear.

\bibitem{SviridenkoKnapsack}
M.~Sviridenko.
\newblock A note on maximizing a submodular set function subject to a knapsack
  constraint.
\newblock {\em Operations Research Letters}, 32:41--43, 2004.

\bibitem{TardosPosetIntersection}
E.~Tardos.
\newblock An intersection theorem for supermatroids.
\newblock {\em Journal of Combinatorial Theory}, 50:150--1591, 1990.

\end{thebibliography}

\end{document}